\setlist[enumerate,1]{label=\roman*.}
\setlist[enumerate,2]{label=\alph*.}
\setlist[enumerate,3]{label=\arabic*.}
\setlist[enumerate,4]{label=(\roman*).}
\setlist[enumerate,5]{label=(\alph*).}
\setlist[enumerate,6]{label=(\arabic*).}
\setlist[enumerate,7]{label=(\Alph*).}
\setlist[enumerate,8]{label=[\roman*].}
\setlist[enumerate,9]{label=[\arabic*].}
\newcommand{\maxdeg}{{\ensuremath{\Delta}}\xspace}
\newcommand{\mopt}{\ensuremath{{M^*}}\xspace}
\newcommand{\m}{\ensuremath{M}\xspace}
\newcommand{\oneOverTwo}{\ensuremath{{\frac{1}{2}}}\xspace}
\newcommand{\twoOverThree}{\ensuremath{{\frac{2}{3}}}\xspace}
\newcommand{\threeOverFour}{\ensuremath{{\frac{3}{4}}}\xspace}
\newcommand{\mingreedy}{\textsc{MinGreedy}\xspace}
\newcommand{\mds}{\textsc{MDS}\xspace}
\newcommand{\greedy}{\textsc{Greedy}\xspace}
\newcommand{\ranking}{\textsc{Ranking}\xspace}
\newcommand{\mrg}{\textsc{MRG}\xspace}
\newcommand{\karpsipser}{\textsc{KarpSipser}\xspace}
\newcommand{\shuffle}{\textsc{Shuffle}\xspace}
\newcommand{\g}{{\ensuremath{G}}\xspace}
\newcommand{\mg}{{\ensuremath{H}}\xspace}
\newcommand{\w}[1]{{\ensuremath{m_{#1}}}\xspace}
\newcommand{\wopt}[1]{{\ensuremath{m_{#1}^*}}\xspace}
\newcommand{\edge}[1]{\ensuremath{\{#1\}}\xspace}
\newcommand{\dedge}[1]{\ensuremath{(#1)}\xspace}
\newcommand{\inlineheading}[1]{\textbf{#1}}
\newcommand{\dataitem}[1]{\ensuremath{\langle #1\rangle}\xspace}
\newcommand{\transferred}{\ensuremath{\kappa}\xspace}
\newcommand{\hidethis}[1]{}
\newcommand{\theratio}{\ensuremath{\frac{\maxdeg}{2\maxdeg-2}}\xspace}
\newcommand{\invariant}{\textsc{Invariant}\xspace}
\newcommand{(%
\scalebox{.4}{%
\begin{tikzpicture}
\usetikzlibrary{fixedpointarithmetic}
\usetikzlibrary{decorations}
\usetikzlibrary{decorations.markings}
\usetikzlibrary{decorations.pathmorphing}

\tikzstyle{every node} = [circle, fill=white,draw=black,minimum size=17pt,inner sep=0pt];
\tikzstyle{mylabel} = [rectangle,inner sep=0pt,fill=white,draw=none,minimum size=1pt];
\tikzstyle{ghost} = [mylabel];
\tikzstyle{newopt} = [double=white, double distance=2pt,thick];
\tikzstyle{alg}=[thick,
    postaction={
        decorate,
        decoration={markings,
                    mark= at position 0.5 
                          with
                          {
                            \draw (-0.075,0.15) -- (-0.075,-0.15)
                                  (+0.075,0.15) -- (+0.075,-0.15);
                          }
                    }
                }
];
\tikzstyle{altpath} = [decoration={snake}, decorate];
\tikzstyle{flow} = [>=triangle 45];
\tikzstyle{comp} = [lightgray, line width=6pt];
\tikzstyle{compopt} = [opt,double=lightgray];
\tikzstyle{result} = [rectangle, draw, fill=white!20, text width=10em, text centered, rounded corners, minimum height=6em];

\definecolor{verylightgray}{RGB}{200,200,200}

\tikzstyle{opt} = [
draw=none,
postaction={draw=black,decorate,decoration={curveto,raise= 1.5pt}},
postaction={draw=black,decorate,decoration={curveto,raise=-1.5pt}}
];

\tikzstyle{highlight} = [
preaction={draw=lightgray,line width=6pt}
];

\tikzstyle{intendedPick} = [dotted,dash pattern=on 1pt off 3pt, ultra thick];
\tikzstyle{leftside} = [fill=lightgray!66];
\tikzstyle{selected} = [ultra thick];

\end{tikzpicture}%
}%
)%
}[2]{(%
\scalebox{.4}{%
\begin{tikzpicture}
\usetikzlibrary{fixedpointarithmetic}
\usetikzlibrary{decorations}
\usetikzlibrary{decorations.markings}
\usetikzlibrary{decorations.pathmorphing}

\tikzstyle{every node} = [circle, fill=white,draw=black,minimum size=17pt,inner sep=0pt];
\tikzstyle{mylabel} = [rectangle,inner sep=0pt,fill=white,draw=none,minimum size=1pt];
\tikzstyle{ghost} = [mylabel];
\tikzstyle{newopt} = [double=white, double distance=2pt,thick];
\tikzstyle{alg}=[thick,
    postaction={
        decorate,
        decoration={markings,
                    mark= at position 0.5 
                          with
                          {
                            \draw (-0.075,0.15) -- (-0.075,-0.15)
                                  (+0.075,0.15) -- (+0.075,-0.15);
                          }
                    }
                }
];
\tikzstyle{altpath} = [decoration={snake}, decorate];
\tikzstyle{flow} = [>=triangle 45];
\tikzstyle{comp} = [lightgray, line width=6pt];
\tikzstyle{compopt} = [opt,double=lightgray];
\tikzstyle{result} = [rectangle, draw, fill=white!20, text width=10em, text centered, rounded corners, minimum height=6em];

\definecolor{verylightgray}{RGB}{200,200,200}

\tikzstyle{opt} = [
draw=none,
postaction={draw=black,decorate,decoration={curveto,raise= 1.5pt}},
postaction={draw=black,decorate,decoration={curveto,raise=-1.5pt}}
];

\tikzstyle{highlight} = [
preaction={draw=lightgray,line width=6pt}
];

\tikzstyle{intendedPick} = [dotted,dash pattern=on 1pt off 3pt, ultra thick];
\tikzstyle{leftside} = [fill=lightgray!66];
\tikzstyle{selected} = [ultra thick];
#1
\end{tikzpicture}%
}%
#2%
)%
}
\crefname{definition}{Definition}{Definitions}
\crefname{lemma}{Lemma}{Lemmas}
\crefname{appendix}{Appendix}{Appendices}
\crefname{theorem}{Theorem}{Theorems}
\crefname{observation}{Observation}{Observations}
\crefname{equation}{\unskip}{\unskip}
\crefname{proposition}{Proposition}{Propositions}
\crefname{corollary}{Corollary}{Corollaries}
\crefname{table}{Table}{Tables}
\crefname{enumi}{\unskip}{\unskip}
\crefname{section}{Section}{Sections}
\crefname{figure}{Figure}{Figures}
\begin{document}

\mainmatter  

\title{On the Approximation Performance of Degree Heuristics for Matching}


%
%
\author{Bert Besser%
\thanks{Partially supported by DFG SCHN 503/6-1.}%
\and Bastian Werth}
%

\institute{Institut f\"ur Informatik, Goethe-Universit\"at Frankfurt am Main, Germany}

%
%

\maketitle


\begin{abstract}
In the design of greedy algorithms for the maximum cardinality matching problem the utilization of degree information when selecting the next edge is a well established and successful approach.

\smallskip

We define the class of ``degree sensitive'' greedy matching algorithms, which allows us to analyze many well-known heuristics, and provide tight approximation guarantees under worst case tie breaking.
We exhibit algorithms in this class with optimal approximation guarantee for bipartite graphs.
In particular the \karpsipser algorithm, which picks an edge incident with a degree-1 node if possible and otherwise an arbitrary edge, turns out to be optimal with approximation guarantee~\theratio, where~\maxdeg is the maximum degree.

\keywords{matching, greedy, approximation, priority algorithms}
\end{abstract}
\section{Introduction}
Matching problems occur in many applications such as online advertising \cite{msvv07}, image feature matching \cite{cwchr96}, or protein structure comparison \cite{bsx08}.

In Maximum Cardinality Matching a set of node-disjoint edges of maximum size is to be determined.
This problem can be solved in time~$ O(m\sqrt{n}) $ for bipartite as well as general graphs \cite{b90,gt91,hk73,v12}.
The~$ O(n^{2.5}) $ barrier was finally broken in \cite{ms04} with a runtime of~$ O(n^\omega) $, where~$ \omega<2.38 $ holds.

In scenarios where obtaining exact solutions is of less importance than ease of implementation and fast runtime, an approximate greedy algorithm is an adequate choice.
Moreover, greedy matchings can be used as input for exact algorithms to obtain considerable speed-ups \cite{lms10}.

The following randomized greedy algorithms can be implemented in linear time~$ O(n{+}m) $ \cite{frs93,ks81,m97,ps12}.
The \greedy algorithm \cite{t84} picks an edge which is node disjoint from all previously picked edges, the \karpsipser algorithm works like \greedy but picks an edge incident with a node of degree one, if such a node exists \cite{ks81}.
The \mrg algorithm (``modified random greedy'') \cite{t84} first selects a node and then matches it with a neighbor, its variation \mingreedy \cite{t84} first selects a node of minimum degree.
The \shuffle algorithm \cite{gt12} computes a permutation~$ \pi $, processes nodes according to~$ \pi $ and each time picks the $ \pi $-lexicographically first edge.
(\ranking \cite{kvv90} works similar to \shuffle but is tailored for an on-line setting in bipartite graphs.)

\medskip

\inlineheading{Previous Work.}
Experiments show that large matchings are produced by the above algorithms if ties are broken uniformly at random \cite{frs93,lms10,m97,t84}.

All mentioned algorithms compute \emph{maximal} matchings, i.e. matchings to which no further edge can be added.
A maximal matching is at least half as large as a \emph{maximum} matching, hence the above algorithms trivially achieve approximation ratio at least~\oneOverTwo. 
An expected approximation ratio larger than~\oneOverTwo, namely~$ \oneOverTwo{+}\frac{1}{400.000} $, was shown first for \mrg in \cite{adfs95}.
However, the best known inapproximability bound on the expected approximation ratio of \mrg is~$ \twoOverThree $, using methods in~\cite{df91}.
For \shuffle, only recently an expected approximation ratio of at least~$ \approx 0.523 $ was shown in~\cite{ccwz14}, whereas it is only known from \cite{gt12} that this ratio cannot be larger than~$ \frac{3}{4} $.
The expected approximation ratio of \greedy and \mingreedy is at most~$ \oneOverTwo+\varepsilon $, for any~$ \varepsilon>0 $ \cite{df91,p12}.

The expected performance on degree bounded graphs remains open for all mentioned algorithms.
On graphs with degrees at most three, no algorithm discussed so far achieves an expected approximation ratio better than~$ \frac{5}{6} $ \cite{p12}.
An expected approximation ratio of at least $ \oneOverTwo ( \sqrt{(\maxdeg - 1)^2 + 1} - \maxdeg + 2) $ is achieved by \greedy on graphs with degrees at most~\maxdeg \cite{m97}.

Furthermore \mingreedy leaves~$ o(n) $ nodes unmatched in large random 3-regular graphs \cite{frs93}.
In large sparse random graphs \karpsipser computes matchings within~$ o(n) $ of optimum size~\cite{afp98}.

Assuming worst case instead of random uniform tie breaking, in~\cite{b14} it is shown that \mingreedy is guaranteed to compute a matching of size at least~$ \frac{\maxdeg-1/2}{2\maxdeg-2} $ times optimal, if degrees are at most~\maxdeg, but cannot guarantee a factor better than~$ \frac{\maxdeg-1}{2\maxdeg-3} $.
For~$ \maxdeg=3 $ the factor is exactly~$ \frac{2}{3} $, as is also shown in~\cite{b14}.

\medskip

\inlineheading{Our Contributions.}
What is the benefit of using degree information when picking the next edge?
We show tight approximation guarantees for \karpsipser and \mingreedy on bipartite graphs, assuming worst case instead of randomized tie breaking.

We introduce the class of deterministic \emph{degree sensitive} greedy algorithms and show that \karpsipser, \mingreedy, \greedy, \mrg, \shuffle, and all algorithms for the \emph{query commit problem}~\cite{mr11} belong to this class.
(We also consider a class of `two-sided' algorithms like e.g.\ \mds, which repeatedly picks an edge with \emph{minimum degree sum}.)
Our main result is that \mingreedy and \karpsipser are optimal degree sensitive algorithms.

\begin{theorem}
\label{thm:karpsipser}
The \karpsipser algorithm always computes a matching of size at least~$ \theratio $ times optimal for any bipartite graph with degrees at most~\maxdeg.
\end{theorem}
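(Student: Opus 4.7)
The plan is to compare \karpsipser's output $M$ with a maximum matching $\mopt$ along the execution of the algorithm. At step~$t$ an edge $\{u_t,v_t\}$ enters $M$ and both endpoints are removed from the current graph $G_t$; call the step \emph{trivial} if the degree-one rule fires (some endpoint has degree one in $G_t$) and \emph{non-trivial} otherwise. The classical pivoting property of the degree-one rule gives $|\mopt(G_{t+1})|\ge|\mopt(G_t)|-1$ after a trivial step, since any maximum matching of $G_t$ can be rewritten to contain $\{u_t,v_t\}$ without shrinking. For a non-trivial step only the naive bound $|\mopt(G_{t+1})|\ge|\mopt(G_t)|-2$ is directly available.

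Telescoping over the execution would yield the target ratio at once if the per-step drop in $|\mopt(G_t)|$ averaged at most $\frac{2\maxdeg-2}{\maxdeg}$. The core of the proof is therefore an amortized analysis of non-trivial steps, resting on the following structural observation: a non-trivial pick of $\{u,v\}$ can only happen when every surviving vertex has degree $\ge 2$, so $u$ and $v$ each have further alive neighbors in $G_t$. These neighbors persist into $G_{t+1}$ and must themselves be removed by later \karpsipser steps, many of which will in turn be trivial. This pipeline of downstream trivial steps is what absorbs the extra unit of loss, while bipartiteness keeps the charging balanced between the two sides $A,B$ of the partition and the degree bound $\maxdeg$ caps how many edges of $M$ can depend on the same witness vertex.

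Concretely, I would carry along an \inv $O_t$ in the spirit of the \invs-bookkeeping already used elsewhere in the paper: a set of (matched edge, witness vertex) tokens booked by earlier non-trivial steps and destined to be discharged by later trivial steps. The invariant is set up so that the weighted sum $|M_t|+|O_t|\ge\theratio\cdot|\mopt(G_t)|$ is preserved throughout. Trivial steps reduce both sides in lockstep; for a non-trivial step one uses bipartite König duality on the residual graph, together with the degree bound, to produce $\frac{\maxdeg-2}{\maxdeg}$ fresh witness tokens in $G_{t+1}$ that compensate a drop of $2$ rather than $1$ in the optimum.

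The hardest part is the invariant's preservation under an adversarial non-trivial step: a worst-case tie breaker may pick $\{u,v\}$ so as to destroy as many previously booked witnesses as possible, and the proof must use the bipartite max-degree-$\maxdeg$ structure to force a sufficient supply of fresh witnesses on the side opposite those destroyed. Getting the bookkeeping tight enough for the ratio to land exactly at $\theratio$ for every $\maxdeg\ge2$, while accommodating arbitrary interleavings of trivial and non-trivial steps under worst-case tie breaking, is the main technical obstacle.
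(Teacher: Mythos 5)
Your proposal is a plan rather than a proof: the two steps you do establish (a degree-one pick costs the residual optimum at most one edge, an arbitrary pick at most two, and telescoping would finish if the average cost were $\frac{2\maxdeg-2}{\maxdeg}$) are the easy part, and the entire content of the theorem is concentrated in the part you explicitly defer as ``the main technical obstacle.'' Worse, the amortization you sketch cannot work in the form suggested. If each non-trivial step genuinely charged a surplus extracted from later trivial steps, you would need the number of non-trivial steps to be bounded by roughly $\frac{\maxdeg-2}{2}$ times the number of trivial ones; on $K_{\maxdeg,\maxdeg}$ the algorithm makes $\maxdeg-1$ consecutive non-trivial picks and a single trivial one, yet is optimal. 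The resolution is that the drop-by-two bound is almost never realized, so the invariant must detect \emph{when} a non-trivial pick actually destroys two optimum edges, and your proposal supplies no mechanism for this: the ``witness tokens,'' the role of K\H{o}nig duality, the count of $\frac{\maxdeg-2}{\maxdeg}$ fresh tokens per step, and the defense against an adversarial tie-breaker destroying booked tokens are all named but not constructed. (As a smaller point, the displayed invariant $|M_t|+|O_t|\ge\theratio\cdot|\mopt(G_t)|$ is vacuous at termination, where $\mopt(G_T)=\emptyset$, and false at $t=0$, so even the bookkeeping frame needs to be restated before it could telescope to the claim.)

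For comparison, the paper does not reason about the residual optimum at all. It fixes \m and \mopt once, decomposes $\mg=(L\cup R,\m\cup\mopt)$ into augmenting paths and singletons, assigns each component a local ratio, and moves fractional ``coins'' across edges of $E\setminus(\m\cup\mopt)$ between neighboring components. The accounting is anchored on exactly the structural facts your sketch is missing: a node that gets matched while it has degree one pays nothing and can donate its savings back to the path whose creation made it degree one; and a bipartite parity lemma (\cref{lemma:samesidedeg1}) guarantees that not all degree-one nodes on one side can be never-to-be-matched path endpoints, which is what caps the number of debits a matched node can incur at $\maxdeg-2$. Some lemma of this kind, making bipartiteness and the degree bound do quantitative work, is indispensable; until you identify and prove its analogue in your potential-function frame, the argument does not go through.
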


Observe that the guarantee~\theratio for the \karpsipser algorithm implies at least the same guarantee for \mingreedy.
If a degree-1 nodes exist, then both algorithms proceed identically, otherwise the \karpsipser algorithm picks an arbitrary edge whereas \mingreedy employs a finer edge selection routine.

On general graphs, \karpsipser and \mingreedy do not perform equally well.
For~$ \maxdeg{=}3 $, \mingreedy achieves guarantee~\twoOverThree, see~\cite{b14}, whereas \karpsipser can only guarantee~\oneOverTwo (the chord of a length-four cycle might be picked).

\medskip

It is optimal to pick an edge with a degree-1 node, since such an edge belongs to some maximum matching.
This observation in a sense explains \karpsipser.
To prove \cref{thm:karpsipser} we devise a charging scheme which implicitly builds upon this fact.
Consider the connected components of the graph~\mg on edge set~$ \m\cup\mopt $, where \m is the matching computed by \karpsipser and~\mopt is an arbitrary maximum matching.
Connected components of~\mg with small ``local'' approximation ratios are amortized by ``neighboring'' components with large local approximation ratios, where two components are neighbors if they are connected by an edge of the input graph.
When a node gets matched, a charge depending on its current degree is applied.
A node which gets matched when it has degree one is not charged, and has the potential to increase the local approximation ratio of its own or of a neighboring component.

\smallskip
\medskip

To study limitations of greedy matching algorithms we utilize the framework of \emph{adaptive priority algorithms} introduced by Borodin, Nielsen and Rackoff \cite{bnr03}.
It was successfully applied to e.g. Scheduling \cite{bnr03}, Max-Sat \cite{p11}, Sum-Coloring \cite{biyz12}, graph problems like Steiner-Tree or Independent-Set \cite{di09}, or matching in general graphs \cite{b14,p12}.
Inapproximability results are obtained similar to the adversarial arguments found in the analysis of competitive ratios of online algorithms.

An adaptive priority algorithm~$ A $ is defined relative to the notion of a \emph{data item}, in which only part of the input is revealed.
At the beginning of each round~$ A $ computes, incorporating all information gathered in previous rounds, a total priority order of all possible data items and receives the data item~$ d $ of highest priority contained in the input.
Then~$ A $ has to make an irrevocable decision based on~$ d $, thereby constructing part of the solution once and forever.

The notion of ``greedy'' is captured by the submitted orders and the irrevocable decisions.
Adaptive priority algorithms have no resource constraints, hence inapproximability results apply to correspondingly large classes of algorithms.

\medskip

We define \emph{degree sensitive} algorithms which utilize data items of the form
\[ \dataitem{u,d_u,v}\,, \]
where~$ u,v $ are nodes and~$ d_u{\geq}1 $ is an integer.
In any data item~\dataitem{u,d_u,v} received by algorithm~$ A $ nodes~$ u $ and~$ v $ are neighbors and~$ u $ has degree~$ d_u $.
Here we refer to the \emph{reduced graph}, which contains exactly the edges incident with nodes not matched in earlier rounds.
If~\dataitem{u,d_u,v} is received, then~$ u $ and~$ v $ must be matched.

Additionally, before the first round an algorithm may access a priori knowledge on the input.
We allow access to the number of nodes in the input graph.

\begin{theorem}
\label{thm:degsens}
For each degree sensitive algorithm~$ A $ and for any~$ \varepsilon>0 $, there is a bipartite graph of degree at most~\maxdeg (and with a perfect matching) such that~$ A $ computes a matching of size at most~$ \theratio+\varepsilon $ times optimal.
\end{theorem}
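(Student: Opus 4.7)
The plan is to exhibit, for any given degree sensitive algorithm $A$ and any $\varepsilon>0$, an adaptive adversary in the priority algorithm framework that reveals a bipartite input graph with maximum degree $\Delta$ and a perfect matching on which $A$ outputs a matching of size at most $(\frac{\Delta}{2\Delta-2}+\varepsilon)$ times optimal.

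The central building block is a bipartite gadget $\Gamma$ on $4\Delta-4$ vertices that admits a perfect matching $M^*_\Gamma$ of size $2\Delta-2$ and an ``attack'' matching $M_\Gamma\subseteq E(\Gamma)$ of size $\Delta$. The gadget is designed so that (a) its initial degree profile exhibits enough symmetry that the algorithm's priority order cannot single out an attack edge for avoidance, (b) after the adversary has steered $A$ into matching the $\Delta$ edges of $M_\Gamma$ inside a copy, the $2\Delta-4$ remaining vertices of that copy have lost all their neighbors in the reduced graph and thus stay unmatched in $A$'s output, and (c) every degree-1 edge ever arising inside a copy during the forced attack lies in $M_\Gamma$, so a \karpsipser-like preference for degree-1 data items does not rescue the algorithm by unwrapping the gadget through a safe cascade.

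The adversary holds in reserve a pool of many disjoint, still uncommitted copies of $\Gamma$. In each round, it inspects the priority order submitted by $A$ and identifies the highest-priority data item $\langle u,d_u,v\rangle$ that can be realized as an attack edge of some (possibly fresh) copy, using the edge-transitivity of $\Gamma$ to freely relabel vertices of an uncommitted copy so that the required degree $d_u$ and adjacency are matched. It returns that data item, extends its commitment accordingly, and $A$ adds the edge to its matching. Iterating, every copy ends up contributing exactly the $\Delta$ edges of $M_\Gamma$ to $A$'s matching; once all copies are consumed, the adversary attaches a small bipartite padding component that covers the orphaned attack-uncovered vertices by a matching, so that the final graph still admits a global perfect matching.

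With $k$ copies, $A$'s output has size $\Delta k + O(1)$ while the global optimum has size $(2\Delta-2)k + O(1)$, yielding ratio $\frac{\Delta}{2\Delta-2}+o(1)$ as $k\to\infty$; choosing $k$ large enough absorbs the remaining slack into $\varepsilon$. The main obstacle is the simultaneous satisfaction of properties (a)--(c) under the degree-$\Delta$ bound, together with the inductive argument that the adversary can always find a top-priority attack-edge data item to return. The invariant is maintained by leveraging that the algorithm observes only $\langle u,d_u,v\rangle$ and therefore cannot distinguish symmetric copies of $\Gamma$ from one another; thus the adversary retains the flexibility to route the top-priority data item onto an attack edge in a fresh or still-partially committed copy, no matter which priority $A$ chose.
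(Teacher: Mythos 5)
Your proposal reproduces the correct high-level accounting (a gadget on $4\maxdeg-4$ nodes in which the algorithm scores \maxdeg edges against an optimum of $2\maxdeg-2$, repeated $k$ times), and this is indeed how the paper's traps are sized. But the core of the argument --- why the adversary can \emph{always} answer the top-priority realizable data item with an attack edge --- is exactly what you defer to ``the main obstacle,'' and the mechanism you offer for it does not work. A degree sensitive algorithm may order data items \dataitem{u,d_u,v} by the claimed degree $d_u$, and the adversary must return the highest-priority item consistent with the final graph. Edge-transitivity of a fixed gadget helps only in the very first round of a fresh copy; once one edge of a copy is matched, the reduced graph has a rigid, asymmetric degree profile, and a worst-case priority order will request precisely a degree value that occurs in the reduced graph only at endpoints of non-attack edges, forcing the adversary off your attack matching. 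The paper avoids this not by symmetry but by a dynamic device: the construction keeps the minimum degree of the reduced graph equal to $2$ at every decision point, the adversary accepts the top item with any $2\le d_u\le\maxdeg$, relabels $u$ to be the intended degree-2 attack node, and then \emph{inserts $d_u-2$ additional edges} from $u$ to a reserved set of absorber nodes to make the degree claim true --- together with a budget argument that these insertions never push any node above degree \maxdeg and never retroactively change which data items were realizable in earlier rounds. Nothing in a fixed, edge-transitive gadget substitutes for this, and it is far from clear that any fixed bipartite gadget with your isolation property (b) can guarantee that every degree occurring in every reachable reduced graph is witnessed at an attack-edge endpoint.

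Two further gaps. First, attaching a ``padding component'' to the orphaned vertices after the copies are consumed is not obviously consistent: those vertices' degrees are constrained by data items already answered (their neighbors' claimed degrees count these edges), and property (b) requires them to be isolated in the reduced graph, so you must argue that the late edges neither change any previously claimed degree nor alter which data item was highest-priority in an earlier round; moreover, if each copy of $\Gamma$ already carries a perfect matching, the padding is unnecessary, so this step as written is both under-justified and redundant. Second, the case $\maxdeg=3$, where the target ratio is \threeOverFour, is not covered: the paper's generic construction (with $\Lambda=\maxdeg-4$ internal paths) degenerates there and is replaced by a separate construction in which no additional edges are ever inserted, and your proposal supplies no analogue.
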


Consequently, \karpsipser is an optimal degree sensitive algorithm.
Why?
To implement \karpsipser as a degree sensitive algorithm, in each round the priority order begins with all possible data items~\dataitem{u,1,v}, in arbitrary order, and continues with all remaining data items, also in arbitrary order.
Similarly, \greedy, \mrg, \mingreedy, and \shuffle can be implemented as degree sensitive algorithms.
All algorithms for the query commit problem are degree sensitive as well:
such an algorithm has access to the set of nodes of a graph but has no knowledge of its edges, repeatedly tests whether two unmatched nodes are connected by an edge, and adds each found edge to the matching.

\medskip

Consider so called two-sided algorithms like e.g.\ \mds, which repeatedly picks an edge such that the degree sum of both incident nodes is minimum.
Such algorithms are contained in the natural generalization of degree sensitive algorithms to data items of the form
\[\dataitem{u,d_u,v,d_v}\,,\]
where~$ u,d_u $, and~$ v $ are defined as before and~$ d_v $ is the current degree of node~$ v $.
We show that such algorithms cannot achieve approximation ratio larger than~$ \frac{\maxdeg+1}{2\maxdeg-2} $.

Note that this bound is only marginally weaker than our~$ \frac{\maxdeg}{2\maxdeg-2} $ bound for degree sensitive algorithms, and we conjecture that it can be strengthened to the same factor.
To support our conjecture, we prove it for~$ \maxdeg=3 $ and show that the approximation ratio of \mds is bounded by~$ \frac{\maxdeg}{2\maxdeg-2} $.

\medskip

\inlineheading{Structure of the Paper.}
We prove \cref{thm:karpsipser,thm:degsens} in \cref{sect:mingreedyguarantee,sect:adaprio_upper_bound}, respectively.
In \cref{amoregeneralclassofalgorithms} we discuss two-sided algorithms.
Results on graphs with bounded average degree are discussed in \cref{app:avgdeg}.
Conclusions and open problems are presented in \cref{sect:conclusion}.


\newcommand{\samesidelemmatext}{
If there is a right degree-1 path endpoint~$ w $, then in the right partition there is a degree-1 node which is not a path endpoint.
}
\newcommand{\independentendsteplemmatext}{
The increases of nodes $x_R', x_L$, and  $w_R$ sum up to at least $2$.
}
\newcommand{\dynamictransferlemmatext}{
Consider the donation transfer~\dedge{u_1,v_0}.
\begin{enumerate}[topsep=0mm,noitemsep,label=\alph*)]
\item\label{lemma:dynamictransfer:maxdeg-3}
If~\dedge{u_1,v_0} moves at most~$ \maxdeg-3 $ coins, then~\dedge{u_1,v_0} adheres to our plan.
\item\label{lemma:dynamictransfer:maxdeg-2}
Assume that when~$ u_1 $ is selected it is the only degree-1 node in its partition which is not an augmenting path endpoint.
If~\dedge{u_1,v_0} moves at most~$ \maxdeg-2 $ coins, then~\dedge{u_1,v_0} adheres to our plan.
\end{enumerate}
}
\newcommand{\numstatictransferslemmatext}{
An augmenting path endpoint~$ w $ receives exactly two common credits if and only if~$ w $ has degree~$ d(w)\geq2 $ after its~\mopt-edge is removed from~\g.
Otherwise~$ w $ receives exactly one common credit.
}

\section{A Tight Performance Guarantee for \karpsipser}%
\label{sect:mingreedyguarantee}%
This section proves \cref{thm:karpsipser}.
Let~$ \g{=}(L\cup R,E) $ be the bipartite input graph.
We fix the matching~$ \m{\subseteq} E $ computed by \karpsipser and a maximum matching~$ \mopt{\subseteq} E $.
Nodes in the graph~$ \mg{=}(L\cup R,\m\cup\mopt) $ have degree at most two:
The connected components of~\mg are paths, cycles, and isolated nodes.
We ignore isolated nodes.
W.l.o.g. we choose~\mopt such that each~\mg-component is either an augmenting \emph{path} or a \emph{singleton}.
A path~$ X $ alternates between~$ \w{X}\geq1 $ edges of~\m and~$ \w{X}{+}1 $ edges of~\mopt.
The two path \emph{endpoints} of~$ X $ are not covered by~\m
(%
\scalebox{.4}{%
\begin{tikzpicture}

\node (1) {};
\node[right=of 1] (2) {};
\node[right=of 2] (3) {};
\node[right=of 3] (4) {};
\node[right=1.5 of 4] (5) {};
\node[right=of 5] (6) {};
\node[right=of 6] (7) {};
\node[right=of 7] (8) {};

\node[mylabel]  at ($(4)!0.5!(5)$) {$ \dots $};

\draw (1) edge[newopt] (2);
\draw (2) edge[alg] (3);
\draw (3) edge[newopt] (4);
\draw (5) edge[newopt] (6);
\draw (6) edge[alg] (7);
\draw (7) edge[newopt] (8);

\end{tikzpicture}%
}%
 where~\mopt-edges and~\m-edges are drawn double resp. crossed%
)%
.
A singleton is an edge contained in both~$ \m$ and~$ \mopt $
(%
\scalebox{.4}{%
\begin{tikzpicture}

\node (1) {};
\node[right=of 1] (2) {};
\draw (1) edge[opt,alg] (2);

\end{tikzpicture}%
}%
)%
.
Each other component, i.e. each even-length path or cycle, is turned into singletons by replacing its maximum matching edges with its~\m-edges
(%
\scalebox{.4}{%
\begin{tikzpicture}

\node (1) {};
\node[right=of 1] (2) {};
\node[right=of 2] (3) {};
\draw (1) edge[newopt] (2) (2) edge[alg] (3);

\node[right=.667 of 3,mylabel] {\scalebox{2}{$ \leadsto $}};

\node[right=2 of 3] (1) {};
\node[right=of 1] (2) {};
\node[right=of 2] (3) {};
\draw (1) edge[] (2) (2) edge[alg,newopt] (3);

\end{tikzpicture}%
}%
)%
.
Since we ignore isolated~\mg-nodes any node is~\m-covered or a path endpoint, which never gets matched.

\smallskip

\inlineheading{Local Approximation Ratios.}
We lower bound \emph{local} approximation ratios of paths and singletons.
A path~$ X $ has local approximation ratio~$ \frac{\w{X}}{\w{X}+1} $, a singleton has local approximation ratio~$ \frac{1}{1}{=}1 $.
Small local approximation ratios of short paths will be amortized by those of long paths and singletons.

We transfer~\emph{coins} between~\mg-components, each coin is worth~\transferred of \mbox{`\m-funds'}.
If component~$ X $ \emph{receives}~$ c_X $ coins and \emph{pays}~$ d_X $ coins, then~$ c_X{-}d_X $ is the \emph{balance} of~$ X $.
The local approximation ratio of~$ X $ becomes
\[ \left(~~\w{X}+\transferred\cdot(c_X-d_X)~~\right)~~/~~\wopt{X}\,, \]
where $ \w{X},\wopt{X} $ are the numbers of~\m-edges respectively~\mopt-edges of~$ X $.
We establish balances of at least
\begin{align}
\label{eqn:balanceboundsingleton}
c_X-d_X&\geq-2(\maxdeg-2)&\mbox{for each singleton~$ X $ and}\\
\label{eqn:balanceboundaug}
c_X-d_X&\geq-\w{X}\cdot2(\maxdeg-2)+2\maxdeg&\mbox{for each path~$ X $.}
\end{align}
The local approximation ratio of a singleton~$ X $ is at least \mbox{$l:=1-\transferred\cdot2(\maxdeg-2) $}, since we have~$ \w{X}{=}\wopt{X}{=}1 $.
Choosing $ \transferred:=\frac{1}{2(2\maxdeg-2)} $ we obtain a lower bound of~$ l=\theratio $.
The local approximation ratio of a path~$ X $ attains the same lower bound, since it is at least
\begin{align*}
\frac{\w{X}-\w{X}{\cdot}\transferred{\cdot}2(\maxdeg{-}2)+\transferred{\cdot}2\maxdeg}{\w{X}+1}
=\frac{\w{X}\cdot l+\transferred{\cdot}2\maxdeg}{\w{X}+1}
=l+\frac{\transferred{\cdot}2(2\maxdeg{-}2)-1}{\w{X}+1}=l\,.
\end{align*}

Since the minimum local approximation ratio over all components in~\mg is $ l{=}\theratio $, \karpsipser achieves (global) approximation ratio at least~\theratio:
\[ \frac{|\m|}{|\mopt|}=\frac{\sum_X\w{X}}{\sum_X\wopt{X}}=\frac{\sum_X\w{X}+\transferred\cdot(c_X-d_X)}{\sum_X\wopt{X}}\geq\frac{\sum_X\theratio\cdot\wopt{X}}{\sum_X\wopt{X}}=\theratio\,. \]

\subsection{Balance Bounds: The Plan}

To establish \cref{thm:karpsipser} it remains to verify the balance bounds \cref{eqn:balanceboundsingleton,eqn:balanceboundaug}.
Here is our plan.
We claim that each~\m-covered node of a path~$ X $ pays at most~$ \maxdeg-2 $ coins.
Hence the balance of~$ X $ is at least~$ c_X-d_X\geq-2\w{X}\cdot(\maxdeg-2) $.
To verify \cref{eqn:balanceboundaug} we prove a balance \emph{increase} for~$ X $ of at least~$ 2\maxdeg $.
Increase for~$ X $ comes from~$ X $-nodes which pay less than~$ \maxdeg-2 $ coins or receive coins.

The first~$ X $-node~$ u_L\in L $ in the left partition is matched in the \emph{creation step} of~$ X $.
The \emph{left end step} of~$ X $ matches the~$ X $-node~$ x_L{\in} L $ in the edge~$ \edge{x_L,w_R}{\in}\mopt $ with the~$ X $-endpoint~$ w_R\in R $.
Note that~$ u_L{=}x_L $ might hold.
The node matched with~$ x_L $ is called~$ x_R' $.
Nodes in the opposite partitions are defined analogously
(double drawn edges belong to~\mopt and crossed edges belong to~\m):
\begin{center}
\scalebox{.8}{
\begin{tikzpicture}

\node (wl) {$ w_L $};
\node[right=.4 of wl] (xr) {$ x_R $};
\node[right=.4 of xr] (xl') {$ x_L' $};
\node[right=.4 of xl',mylabel] (xl'') {};
\node[right=.2 of xl'',mylabel] (dots) {\dots};
\node[right=.2 of dots,mylabel] (ur') {};
\node[right=.4 of ur'] (ur) {$ u_R $};
\node[right=.4 of ur] (ul) {$ u_L $};
\node[right=.4 of ul,mylabel] (ul') {};
\node[right=.2 of ul',mylabel] (dots) {\dots};
\node[right=.2 of dots,mylabel] (xr'') {};
\node[right=.4 of xr''] (xr') {$ x_R' $};
\node[right=.4 of xr'] (xl) {$ x_L $};
\node[right=.4 of xl] (wr) {$ w_R $};

\draw
(wl) edge[newopt] (xr)
(xr) edge[alg] (xl')
(xl') edge[newopt] (xl'')
(ur') edge[newopt] (ur)
(ur) edge[alg] (ul)
(ul) edge[newopt] (ul')
(xr'') edge[newopt] (xr')
(xr') edge[alg] (xl)
(xl) edge[newopt] (wr)
;
\end{tikzpicture}}
\end{center}
Our plan is to show that a balance increase for~$ X $ of at least~\maxdeg can be achieved by some of nodes~$ u_L,x_L,x_R',w_R $ and a certain~\g-neighbor~$ v_R $ of~$ u_L $.
The actual selection of increase nodes is determined later.
We say that increase~\maxdeg is achieved \emph{for partition~$ L $ of~$ X $}.
W.l.o.g. in our analysis we discuss partition~$ L $.
A balance increase of~\maxdeg for partition~$ R $ of~$ X $ is obtained from the analogous set of nodes.

\smallskip

\inlineheading{Transfers.}
We move coins over edges in~$ F=E\setminus(\m\cup\mopt) $, where~$ F $-edges connect ``neighboring'' components of~\mg.
An~$ F $-edge which moves coins is called a \emph{transfer}, and moves coins in exactly one direction.
Therefore we denote a transfer as a directed edge~\dedge{u,w} and call it a \emph{debit} from~$ u $ and a \emph{credit} to~$ w $.
We define \emph{common} transfers and \emph{donation} transfers.

\begin{definition}
\label{def:statictransfer}
Let edge~$ \edge{u,w}\in F $ connect an~\m-covered node~$ u $ with a path endpoint~$ w $.
Then~$ \dedge{u,w} $ is a \emph{common transfer} and moves one coin, iff after the step which matches~$ u $ and removes~\edge{u,w} from~\g the degree of~$ w $ is at most one.
\end{definition}

If~$ u_L $ has a common debit~\dedge{u_L,w}, then after creation of~$ X $ node~$ w $ has become a \emph{degree-1} node, i.e. after creation of~$ X $ node~$ w $ has degree exactly one.
Why?
Before~$ u_L,u_R $ are matched, both are incident with an~\m-edge and an~\mopt-edge.
So when~$ X $ is created, all degrees are at least two, since \karpsipser picks an edge with a degree-1 node if possible.
Furthermore, observe that degrees are decreased by at most one in each step since \g is bipartite.
In particular, in the creation step of~$ X $ the degree of~$ w $ is decreased from exactly two to exactly one.

If after creation of~$ X $ there is a (is no) degree-1 path endpoint among the~\g-neighbors of~$ u_L $, then we say that \emph{a (no) right degree-1 endpoint exists after creation of~$ X $}.
In the (no)-case, some of nodes~$ u_L,x_L,x_R',w_R $ achieve a balance increase of at least~\maxdeg for partition~$ L $ of~$ X $ (\cref{lemma:nodeg1creation}).
To discuss the rest of our plan assume the other case, i.e. that a right degree-1 endpoint exists after creation of~$ X $, call it~$ w $.
A certain~\g-neighbor of~$ u_L $ in the right partition of~$ w $, call it~$ v_R $, pushes the balance increase for partition~$ L $ of~$ X $ to at least~\maxdeg (\cref{lemma:deg1creation}).

\smallskip

\inlineheading{\boldmath How to Choose~$ v_R $?}
Recall that the right path endpoint~$ w $ never gets matched.
After creation of~$ X $, node~$ w $ has degree one, thus \karpsipser matches a \mbox{degree-1} node next.
In particular, by \cref{lemma:samesidedeg1} (shown later) the right partition of~$ w $ also contains \mbox{degree-1} nodes~$ v_1,\dots,v_s,\,s\geq1$ which will get matched.

\begin{proposition}
\label{lemma:samesidedeg1}
\samesidelemmatext
\end{proposition}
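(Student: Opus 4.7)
The plan is to locate the required degree-1 non-path-endpoint by tracking the unique neighbor of $w$ forward through \karpsipser's execution. Let $z\in L$ denote $w$'s unique remaining neighbor at the moment $w$ is a degree-1 path endpoint. Since $w$ is never matched, $w$ remains a degree-1 node in every subsequent step so long as $z$ has not yet been removed; hence from this moment on \karpsipser always has access to a degree-1 edge and, by its tie-breaking rule, must pick a degree-1 edge at every step.

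First I would argue that $z$ must be matched to some right node $w'\neq w$. If $z$ were matched with $w$, then $w$ would be matched, contradicting that $w$ is a path endpoint. And if $z$ were never matched at all, then $z$ and $w$ would both persist indefinitely; as other edges incident to $z$ are removed by matching steps elsewhere, eventually $z$ is left with $w$ as its only neighbor, at which point the only degree-1 edge still available at $z$ is $\{z,w\}$, so \karpsipser is forced to pick it and match $w$ --- again a contradiction.

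Second, consider the step $t^*$ at which $z$ is matched with such a $w'$. At the beginning of $t^*$, node $w$ is still present with $z$ as its unique neighbor (so $w$ has degree one), and $z$ has both $w$ and $w'$ as neighbors (so $z$ has degree at least two). The edge picked by \karpsipser at step $t^*$ is $\{z,w'\}$, and by the first paragraph it must be a degree-1 edge. Since $z$ has degree at least two, the degree-1 endpoint of $\{z,w'\}$ must be $w'$. Since $z\in L$, the node $w'$ lies in the right partition, and since $w'$ is matched (to $z$) it is not a path endpoint. Thus $w'$ witnesses the claim.

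The main obstacle I expect is the existence argument for $z$'s eventual match to some $w'\neq w$: ruling out both ``$z$ matched with $w$'' and ``$z$ never matched'' needs the observation that \karpsipser's degree-1 preference, once triggered by $w$, propagates all the way through the remaining execution and cannot leave $\{z,w\}$ behind. Once that is in hand, the fact that $w'$ itself has degree one at step $t^*$ is an immediate consequence of the persistence of $w$ as a degree-1 neighbor of $z$.
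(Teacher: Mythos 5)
Your construction of $w'$ is sound as far as it goes, but it exhibits a right degree-1 non-endpoint only at the later step $t^*$ at which $z$ is matched, whereas the proposition --- as it is used throughout \cref{sect:mingreedyguarantee} --- asserts the existence of such a node \emph{at the same moment} at which $w$ is a degree-1 endpoint. (For instance, $v_R$ is defined as a node that has degree one immediately after the creation of $X$, and the contradictions in \cref{lemma:deg1creation}, \cref{lemma:singleton} and \cref{prop:independstep} are derived from the configuration at one specific step; a witness that only materializes later is useless there.) Nothing in your argument rules out that at the moment in question, call it $t_0$, every right degree-1 node is an endpoint while $w'$ still has degree three and only loses incident edges during the steps between $t_0$ and $t^*$.

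The missing ingredient is precisely the observation on which the paper's proof by contradiction rests: as long as every right degree-1 node is an endpoint, \karpsipser must match a \emph{left} degree-1 node $u$ with some right node $v$, and such a step changes no degree in the right partition except that of the matched node $v$; hence the set of right degree-1 nodes, and the hypothesis that they are all endpoints, persists unchanged --- the paper iterates this to contradict termination (the degree of $w$ must eventually drop to zero). You could repair your direct argument with the same observation: take the first step in $[t_0,t^*]$ whose picked degree-1 endpoint lies in the right partition ($t^*$ is such a step, so one exists); since all earlier picks in this interval matched left degree-1 nodes, that right node already had degree one at $t_0$, and it is not an endpoint because it gets matched. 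As a minor further point, your claim that \karpsipser is ``forced to pick $\{z,w\}$'' once it is the last edge at $z$ is imprecise (the algorithm may keep picking degree-1 edges elsewhere); the existence of $t^*$ instead follows from maximality of the computed matching, which forbids both endpoints of $\{z,w\}$ remaining unmatched.
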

\newtheorem*{samesidelemma}{\cref{lemma:samesidedeg1}}

\noindent
We choose~$ v_R $ as the first of~$ v_1,\dots,v_s $ which gets matched.
Note that~$ v_R $ is not necessarily matched in the step after creation of~$ X $, since after creation of~$ X $ partition~$ L $ might contain a degree-1 node as well.

No~$ F $-edges are incident with~$ v_R $ when it gets matched with degree one.
So, by \cref{def:statictransfer}, zero common debits leave~$ v_R $.
Thus~$ v_R $ can increase the balance of its component.
If~$ v_R $ belongs to~$ X $, then we will see that some of~$ u_L,x_L,x_R',w_R,v_R $ achieve increase at least~\maxdeg.
If~$ v_R $ belongs to a component~$ Y{\neq} X $ then we donate the increase for~$ Y $ back to~$ X $ using a donation transfer~\dedge{v_R,u_L}.

\begin{definition}
\label{def:dyntrans}
If~$ v_R $ belongs to another component than~$ u_L $, then edge~$ \dedge{v_R,u_L} $ is a \emph{donation transfer}.
Transfer \dedge{v_R,u_L} moves~$ \maxdeg{-}3 $ coins unless the following holds, in which case it moves~$ \maxdeg{-}2 $ coins:
Before~$ v_R $ gets matched the right partition contains exactly~$ \maxdeg{-}2 $ degree-1 nodes besides~$ v_R $ which are all endpoints.
\end{definition}

Our claim that a path node pays at most~$ \maxdeg{-}2 $ coins holds, as we show now.
(Whenever the component for which a node is defined is not clear from context we use superscripts to indicate the component.)
We first argue that~$ v_R^X \neq v_R^Y$ holds for paths~$ X\neq Y $.
Node~$ v_R^X $ has degree one after creation of~$ X $, hence~$ v_R^X $ is matched before \karpsipser picks an edge without a degree-1 node.
In particular, node~$ v_R^X $ is matched before the next path is created, call it~$ Y $.
But~$ v_R^Y $ is matched after~$ Y $ is created, hence we get~$ v_R^X\neq v_R^Y $.
Consequently, at most one donation debit leaves~$ v_R^X $.
Now recall that~$ v_R^X $ has no common debits, since~$ v_R^X $ is matched with degree one.
Our argument applies in particular if~$ v_R^X $ is a path node.
Thus each path node either pays at most~$ \maxdeg{-}2 $ coins in one donation debit, or one coin in each of at most~$ \maxdeg{-}2 $ common debits.

\subsection{Preliminaries}

We have to verify that the increase of a node of a path~$ X $ is counted either for~$ L $ or for~$ R $, but not for both partitions.
We define node sets which increase the balance for partitions~$ L $ resp.~$ R $ of~$ X $, and argue that they do not intersect.
\begin{itemize}[topsep=0mm,noitemsep]
\item
If~$ u_L{=}x_L $ holds, then we obtain increase from nodes in~$ I_L^{=}{=}\lbrace u_L,w_R \rbrace$.
\item
If~$ u_L{\neq} x_L $ holds, then increase comes from nodes in~$ I_L^{\neq}{=}\lbrace u_L,x_L,x_R',w_R\rbrace $.
\end{itemize}
If a right degree-1 endpoint exists after creation, then~$ I_L^=,I_L^{\neq} $ additionally contain~$ v_R $.
Sets~$ I_R^=,I_R^{\neq} $ are defined analogously, depending on~$ u_R{=}x_R $ resp.~$ u_R{\neq}x_R $.

Observe that we have~$ v_R \notin\lbrace x_R,u_R\rbrace $ and~$ v_L \notin\lbrace x_L,u_L\rbrace $ since a donation transfer source node~$ v $ gets matched when it has degree one whereas an~$ x $-node or~$ u $-node gets matched when it is incident with an~\m-edge and an~\mopt-edge.
One of the following holds:
\begin{itemize}[noitemsep,topsep=0mm]
\item
$ u_L{=} x_L{\wedge} u_R{=}x_R $:
In this case observe that~$ I_L^=\cap I_R^==\emptyset $ holds.
\item
$ u_L{=} x_L {\wedge} u_R{\neq}x_R $ (analogous to~$ u_L{\neq} x_L{\wedge} u_R{=}x_R $):
For~$ L $ we obtain increase from nodes in~$ I_L^= $.
From~$ u_R{\neq x_R} $ we get~$ u_L\neq x_L' $, thus~$ I_L^=\cap I_R^{\neq}=\emptyset $ holds.
\item
$ u_L{\neq} x_L{\wedge} u_R{\neq}x_R $:
Here we have~$ u_R{\neq}x_R'{\wedge} u_L{\neq} x_L' $, therefore~$ I_L^{\neq}\cap I_R^{\neq}=\emptyset $ holds.
\end{itemize}

\smallskip

\inlineheading{\boldmath Isolated Nodes in~\mg.}
Recall that our analysis ignores isolated~\mg-nodes.
Why is our guarantee valid?
Isolated~\mg-nodes are never matched by the \karpsipser algorithm.
We assume that each node which is never matched is a path endpoint.
Hence an isolated~\mg-node might receive but does not pay transfers.
Thus it only decreases but does not increase local approximation ratios.

\subsection{Balance Bounds: The Proof}
\label{sect:proofkarpsipser}

Recall that we use a donation transfer~\dedge{v_R,u_L} only if a right \mbox{degree-1} path endpoint~$ w $ exists after creation of a path~$ X $, where~$ w $ is a~\g-neighbor of~$ u_L $.
If~$ w $ belongs to a component other than~$ X $, then a common transfer~\dedge{u_L,w} goes from~$ u_L $ to~$ w $.
If~$ w $ belongs to~$ X $, then we have~$ w{=}w_R $ and~$ u_L{=}x_L $, i.e. path~$ X $ is created in an end step.
In this case~$ w_R $ receives only one common credit:

\newcommand{\propnumstatcredtext}{Node~$ w_R $ receives exactly one common credit iff~$w_R$ has degree at most one after~$ x_L $ gets matched.
Else $w_R$ receives exactly two common credits.}
\begin{proposition}
\label{prop:numstatcred}
\propnumstatcredtext
\end{proposition}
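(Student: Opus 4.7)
The plan is to track the evolution of $w_R$'s degree in the reduced graph and to observe that at most two of the matching steps removing an edge incident with $w_R$ can contribute a common credit.

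Since $w_R$ is a path endpoint it is never matched by \karpsipser, and hence every \g-neighbor of $w_R$ is eventually matched; otherwise an edge incident with $w_R$ would survive in the final reduced graph, contradicting maximality of \m. The \g-neighbors of $w_R$ are precisely $x_L$, its \mopt-partner, together with its $F$-neighbors. Since \g is bipartite and $w_R \in R$, each step that matches some \g-neighbor of $w_R$ drops $w_R$'s reduced degree by exactly one. Let $k$ denote $w_R$'s initial \g-degree, and let $\sigma(1),\ldots,\sigma(k)$ be the order in which $w_R$'s neighbors are matched. In the context in which the proposition is applied, $w_R$ has reduced degree $1$ immediately after the creation step of $X$, so its degree prior to that step was $2$; in particular $k \geq 2$, ensuring that the two slots $k-1$ and $k$ used below are distinct steps.

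By \cref{def:statictransfer}, a common credit $\dedge{u,w_R}$ is produced exactly when $u$ is an $F$-neighbor of $w_R$ and $w_R$'s reduced degree is at most $1$ right after $u$'s matching step. Because $w_R$'s reduced degree equals $k-j$ immediately after the step matching $\sigma(j)$, this inequality holds precisely for $j \in \{k-1,k\}$. Hence the common credits that $w_R$ receives come exactly from those of the last two of $w_R$'s neighbor matchings whose matched node is an $F$-neighbor; the \mopt-neighbor $x_L$ never contributes a common credit.

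A case split on the position $j_0$ for which $x_L = \sigma(j_0)$ concludes the argument. If $j_0 \in \{k-1,k\}$, then exactly one of the two eligible slots matches an $F$-neighbor while the other matches $x_L$, so $w_R$ receives exactly one common credit; moreover $w_R$'s reduced degree after $x_L$'s step equals $k - j_0 \leq 1$. Otherwise $j_0 < k-1$, both eligible slots match $F$-neighbors, so $w_R$ receives exactly two common credits; and $w_R$'s reduced degree after $x_L$'s step equals $k - j_0 \geq 2$. This is exactly the claimed equivalence. The proof itself is a clean counting argument once the reduction to the ``last two slots'' is made; the only care needed is in ensuring that these two slots are well-defined and distinct, which holds because $k \geq 2$ in the application context.
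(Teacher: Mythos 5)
Your argument is essentially the paper's own proof in different notation: both reduce the claim to the observation that, since \g is bipartite and~$ w_R $ is never matched, the edges at~$ w_R $ disappear one per step, so the common credits are exactly the~$ F $-edges among the \emph{last two} edges removed at~$ w_R $, and the case split is on whether the \mopt-edge~\edge{x_L,w_R} occupies one of those two slots. That core counting is correct and complete, including the checks that every neighbor of~$ w_R $ is eventually matched (by maximality of~\m) and that each step removes at most one edge at~$ w_R $.

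The one genuine weak point is your justification of~$ k\geq 2 $. You assert that ``in the context in which the proposition is applied, $ w_R $ has reduced degree~$1$ immediately after the creation step of~$ X $,'' but this is false: the proposition is invoked, e.g., inside \cref{prop:independstep} and in \cref{lemma:nodeg1creation}, in situations where~$ w_R $ may still have two or more incident~$ F $-edges after creation, and the statement is in any case unconditional, so it must be proved without reference to where it is later used. The fact you need is nevertheless true, and the paper proves it directly: at the creation step of~$ X $ the picked nodes~$ u_L,u_R $ each carry an \m-edge and an \mopt-edge, so \karpsipser's preference for degree-1 nodes forces \emph{all} non-isolated nodes to have degree at least two at that moment, and~$ w_R $ is not yet isolated since~\edge{x_L,w_R} is still present; hence~$ k\geq 2 $. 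This repair matters: if~$ k=1 $ were possible, $ w_R $ would receive zero common credits while your (and the paper's) conclusion asserts at least one, so the degree bound is not a formality but the load-bearing precondition of the slot argument. Substitute that justification and your proof is sound.
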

\newtheorem*{propnumstatcred}{\cref{prop:numstatcred}}

Nodes of an end step increase their path's balance by~2.
In particular, increase~2 is achieved no matter if one of the nodes has a donation debit.

\newcommand{\propindependsteptext}{If~$ u_L\neq x_L $ holds, then~$x_R', x_L, w_R$ achieve increase at least~2.}
\begin{proposition}
\label{prop:independstep}
\propindependsteptext
\end{proposition}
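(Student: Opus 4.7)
I plan to prove the proposition by case analysis on the number of common credits $c_{w_R}$ received by $w_R$, using \cref{prop:numstatcred}. If $c_{w_R}=2$, then $w_R$ alone contributes an increase of~$2$; since $x_L$ and $x_R'$ are \m-covered path-nodes, each pays at most $\maxdeg-2$ coins and hence has nonnegative slack, yielding total increase at least~$2$.

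Assume $c_{w_R}=1$, so by \cref{prop:numstatcred} $w_R$ has degree at most~$1$ after the left end step, hence at most~$2$ just before. I would split into two sub-cases. If $x_R'$ has degree~$1$ at the start of the left end step, then $x_R'$ has no $F$-edges; I would further argue that $x_R'$ carries no inter-component donation debit, using that $x_R'$ becomes degree-$1$ only at $X$'s creation step together with the requirement in \cref{def:dyntrans} that the donation source $v_R^Y$ be the \emph{first} degree-$1$ non-endpoint in the right partition matched after $Y$'s creation, to rule out any $Y\neq X$ with $x_R'=v_R^Y$. Hence $d_{x_R'}=0$ and $s_{x_R'}=\maxdeg-2\geq 1$, giving total increase at least~$2$.

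In the remaining sub-case $x_R'$ has degree at least~$2$ at the left end step; combined with $x_L$'s degree being at least~$2$ (retaining its \m-partner $x_R'$ and \mopt-partner $w_R$), \karpsipser's choice of $\{x_L,x_R'\}$ forces that no degree-$1$ node exists at that moment, and in particular $w_R$ has degree exactly~$2$. I would argue by contradiction: if $s_{x_L}=s_{x_R'}=0$ then each of $x_L,x_R'$ pays $\maxdeg-2$ common debits to path endpoints of degree exactly~$2$ before the step; together with $w_R$ this produces $2\maxdeg-3$ new degree-$1$ endpoints immediately after the left end step, all of which must remain unmatched by \karpsipser. The main obstacle lies here: invoking \cref{lemma:samesidedeg1} and carefully tracking \karpsipser's forced sequence of degree-$1$-incident picks, one shows that the supply of non-endpoint degree-$1$ ``absorbers'' is too small to isolate all these endpoints, yielding the desired contradiction and hence $s_{x_L}+s_{x_R'}\geq 1$.
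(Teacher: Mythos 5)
There are two genuine gaps, and the first one breaks your case split. In sub-case A you assert that $x_R'$ carries no donation debit, arguing that $x_R'$ ``becomes degree-1 only at $X$'s creation step'' and that the first-to-be-matched requirement rules out $x_R'=v_R^Y$ for $Y\neq X$. Neither claim holds. Node $x_R'$ keeps its \m-edge to $x_L$ until the left end step of $X$, and once its \mopt-partner is matched it can drop to degree one at any later moment --- in particular exactly at the creation step of a path $Y$ that is created after $X$ but before $X$'s left end step, when an $F$-edge \edge{x_R',u_L^Y} is removed. Then $x_R'$ can perfectly well be the first right degree-1 non-endpoint matched after $Y$'s creation, i.e.\ $x_R'=v_R^Y$, so a donation debit \dedge{x_R',u_L^Y} leaves $x_R'$. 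This is exactly the case on which the paper's proof spends most of its effort: if that debit moves $\maxdeg-2$ coins, then $x_R'$ contributes nothing, $w_R$ contributes only one credit, and one must either extract a further unit from $x_L$ or derive a contradiction by combining \cref{lemma:samesidedeg1} with the extra information packaged into \cref{def:dyntrans} (that just before $x_R'$ is matched the right partition contains exactly $\maxdeg-2$ degree-1 endpoints and no degree-1 non-endpoint other than $x_R'$). Your argument never produces the second unit of increase in this situation.

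The second gap is the one you flag yourself in sub-case B. No counting of ``absorbers'' over the subsequent run of \karpsipser is needed (and it would be hard to control); the contradiction is immediate and local to the end step. If $x_L$ has $\maxdeg-2$ common debits, then \emph{every} $F$-edge of $x_L$ is a common debit and hence goes to a path endpoint. Immediately after the end step the only right-partition nodes whose degree has changed are the \g-neighbors of $x_L$, namely $x_R'$ (now matched), $w_R$, and those endpoints; since all degrees were at least two before the step, every right degree-1 node is now an endpoint, and at least one such endpoint exists (the $\maxdeg-2\geq1$ destinations of $x_L$'s debits have degree exactly one, as does $w_R$ by \cref{prop:numstatcred}). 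This directly contradicts \cref{lemma:samesidedeg1}; your tally of $2\maxdeg-3$ endpoints across both partitions is a red herring, since only the right partition matters.
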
%
\newtheorem*{propindependstep}{\cref{prop:independstep}}

\cref{prop:numstatcred,prop:independstep} are shown later.
We are ready to verify the balance bound \cref{eqn:balanceboundaug} for a path~$ X $:
increase~\maxdeg is achieved for each of partitions~$ L, R $ of~$ X $.

\begin{lemma}
\label{lemma:nodeg1creation}
Let~$ X $ be a path.
If no right degree-1 endpoint exists after creation of~$ X $, then nodes in~$ I_L^= $ resp.~$ I_L^{\neq} $ increase the balance of~$ X $ by~\maxdeg.
\end{lemma}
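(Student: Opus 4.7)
The approach is a case distinction on whether $u_L = x_L$ or $u_L \neq x_L$. In either case I decompose the required increase of $\maxdeg$ as $(\maxdeg-2)+2$, where the $\maxdeg-2$ is the full ``unused budget'' at the creation-step node $u_L$ and the $+2$ is supplied by the remaining members of $I_L^=$ or $I_L^{\neq}$.

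The central step is to show that $u_L$ pays zero coins. Since $u_L$ is matched with degree at least two it is not a donation source, so no donation debit leaves it. For a common debit $\dedge{u_L,w}$ to fire we would need $\{u_L,w\}\in F$ with $w$ a path endpoint whose degree becomes at most one in the creation step. When $X$ is created all degrees are at least two (otherwise \karpsipser would have picked a degree-$1$ edge), and since \g is bipartite with $u_R,w\in R$ non-adjacent, the only incident edge of $w$ removed in that step is $\{u_L,w\}$ itself; so $w$'s degree drops by exactly one and stays at least one afterwards. The hypothesis of the lemma forbids value $1$ for a path-endpoint \g-neighbor of $u_L$, so either $w$ is not an endpoint or its degree stays at least two. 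In either case no common debit fires, and $u_L$'s full budget $\maxdeg-2$ is realised as increase for $X$.

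In the subcase $u_L=x_L$, $I_L^=$ consists of $u_L$ and the endpoint $w_R$; the same bipartite-degree argument applied to the \mopt-edge $\{u_L,w_R\}$ shows that $w_R$ has degree at least two after $x_L=u_L$ is matched, so \cref{prop:numstatcred} supplies exactly two common credits at $w_R$, and together with $u_L$'s saved $\maxdeg-2$ we reach $\maxdeg$. In the subcase $u_L\neq x_L$, $I_L^{\neq}$ adds $x_L,x_R',w_R$, and \cref{prop:independstep} directly provides an increase of at least $2$ from those three end-step nodes, which again sums with the $\maxdeg-2$ from $u_L$ to $\maxdeg$. The main obstacle is the zero-debit claim for $u_L$, which combines the bipartite structure of \g (preventing any $R$-neighbor from losing more than one edge in the creation step) with the lemma's hypothesis (ruling out precisely the unique degree drop that could otherwise fire a common debit); the bookkeeping check that the $u_L$-contribution and the remaining $I_L$-contributions are disjoint is then immediate, since they are accounted either at distinct steps or at distinct nodes.
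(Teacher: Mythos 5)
Your proposal is correct and follows essentially the same route as the paper: no donation debit leaves $u_L$ since it is matched with degree at least two, the hypothesis (no degree-1 endpoint among $u_L$'s $\g$-neighbors after creation) kills every potential common debit from $u_L$ via the bipartite one-degree-drop argument, and the remaining $+2$ comes from $w_R$'s two common credits when $u_L=x_L$ and from \cref{prop:independstep} when $u_L\neq x_L$. The only cosmetic difference is that you invoke \cref{prop:numstatcred} for the two credits to $w_R$ where the paper argues directly that both remaining $F$-edges at $w_R$ become common credits.
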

\begin{proof}
Recall that no nodes but~$ u_L,u_R $ get isolated at creation.
Since thereafter also no right degree-1 endpoint exists, no common debit leaves~$ u_L $.
Moreover, recall that no donation debit leaves~$u_L$.
Hence~$ u_L $ increases the balance by~$ \maxdeg{-}2 $.

If we have $u_L {=} x_L$, then after creation of~$ X $ node~$w_R$ remains with at least two incident~$ F $-edges.
Both are common credits to~$ w_R $ and further increase the balance of~$ X $ by~2.
So nodes in~$ \{u_L,w_R\}\subseteq I_L^= $ increase the balance of~$ X $ by~\maxdeg.

Otherwise we have~$u_L {\neq} x_L$.
Using \cref{prop:independstep}, we obtain additional increase at least~2 from~$ x_R',x_L,w_R $.
Here we have~$ \lbrace u_L,x_R',x_L,w_R \rbrace{\subseteq} I_L^{\neq}$.
\end{proof}

\begin{lemma}
\label{lemma:deg1creation}
Let~$ X $ be a path.
If a right degree-1 endpoint exists after creation of~$ X $, then nodes in~$ I_L^= $ resp.~$ I_L^{\neq} $ increase the balance of~$ X $ by~\maxdeg.
\end{lemma}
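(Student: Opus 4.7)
My plan is to split on whether $v_R\in X$ or $v_R\notin X$, and within each case on $u_L=x_L$ versus $u_L\neq x_L$. In every configuration I will first collect a ``bulk'' contribution of roughly $\maxdeg-2$ units towards the balance of $X$ from the $v_R$-side, and then use the remaining nodes of $I_L$ together with \cref{prop:numstatcred,prop:independstep} to supply the missing $2$ or $3$ units.

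The bulk is immediate. If $v_R\in X$, then because $v_R$ is matched with degree one, no $F$-edge is incident to it at that moment, so no common debit leaves $v_R$; and since $v_R$ and $u_L$ lie in the same component, no donation debit leaves $v_R$ either. Hence $v_R$ pays nothing and saves its full baseline of $\maxdeg-2$ coins for $X$. If instead $v_R\notin X$, then \cref{def:dyntrans} provides the donation transfer $\dedge{v_R,u_L}$ worth $\maxdeg-3$ coins (generic case) or $\maxdeg-2$ coins (special case) as an external credit to $u_L$.

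For the residual $2$- or $3$-unit supplement I proceed as follows. In the sub-case $u_L\neq x_L$ I invoke \cref{prop:independstep} to obtain $2$ units from $\{x_R',x_L,w_R\}\subseteq I_L^{\neq}$, which together with a bulk of $\maxdeg-2$ already reaches $\maxdeg$. In the sub-case $u_L=x_L$ the path $X$ is created in its left end step and the only nodes of $I_L^{=}$ besides $v_R$ are $u_L$ and $w_R$; here \cref{prop:numstatcred} determines how many common credits reach $w_R$, while the external savings of $u_L$ are governed by how many of its at most $\maxdeg-2$ $F$-edges actually give rise to external common debits (i.e.\ lead to an external right degree-$1$ endpoint after creation of $X$).

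The main obstacle will be the sub-case $v_R\notin X$ with only the generic donation of $\maxdeg-3$ coins, especially combined with $u_L=x_L$: I must then extract $3$ extra units of increase from just $\{u_L,w_R\}$. The key will be to exploit the precise condition that triggers the generic $\maxdeg-3$ case rather than the special $\maxdeg-2$ case of \cref{def:dyntrans}, namely that just before $v_R$ is matched the right partition does \emph{not} consist of exactly $\maxdeg-2$ degree-$1$ endpoints together with $v_R$. Tracing this condition back to the creation step of $X$, one of the following must hold: either $u_L$ has a \g-neighbor that is not an external right degree-$1$ endpoint after creation (so $u_L$ has one fewer external common debit than the worst case, giving an extra unit of external saving), or $w_R$ has degree at least two after $x_L$ is matched and hence receives two common credits instead of one by \cref{prop:numstatcred} (giving an extra unit of external receipts). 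A careful case split along these lines should yield enough bonus units to bridge the remaining gap and reach the required $\maxdeg$.
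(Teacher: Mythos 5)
Your overall architecture is the same as the paper's: a bulk of roughly $\maxdeg-2$ from the $v_R$-side (either $v_R$'s own savings when it lies in $X$, or the donation transfer when it does not), topped up by \cref{prop:independstep} when $u_L\neq x_L$ and by \cref{prop:numstatcred} plus the debit count of $u_L$ when $u_L=x_L$; your sketch for the hardest configuration ($v_R\notin X$, $u_L=x_L$, generic donation of $\maxdeg-3$) is, modulo a timing subtlety about when the donation amount is evaluated, exactly the paper's dichotomy, where the failure of both bonus sources forces the special $\maxdeg-2$ donation of \cref{def:dyntrans}.

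There is, however, a genuine gap in your sub-case $v_R\in X$, $u_L\neq x_L$: you add the bulk $\maxdeg-2$ saved by $v_R$ to the $2$ units guaranteed by \cref{prop:independstep} for $\{x_R',x_L,w_R\}$, but $v_R=x_R'$ is possible (it is excluded only that $v_R\in\{x_R,u_R\}$), and in that event the two quantities overlap. Increase is additive over \emph{nodes}, and \cref{prop:independstep} may realize its $2$ units entirely out of $x_R'$ paying no debits --- the very same $\maxdeg-2$ you already booked as the bulk. Concretely, with $v_R=x_R'$, $w_R$ receiving one common credit and $u_L,x_L$ each paying $\maxdeg-2$ common debits, your accounting certifies only $\maxdeg-1$. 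The paper closes this by treating $v_R=x_R'$ as its own case: either $w_R$ receives two common credits, or one of $u_L,x_L$ pays fewer than $\maxdeg-2$ common debits, or else one derives a contradiction to \cref{lemma:samesidedeg1} (after $x_R',x_L$ are matched, every right degree-$1$ node would be an endpoint while at least one endpoint neighbor of $x_L$ has degree exactly one). You need this extra argument, or some equivalent, to recover the missing unit. A smaller omission: the combination $v_R\in X$ with $u_L=x_L$ must be shown vacuous (the only candidate degree-$1$ $X$-node after an end-step creation is the $\mopt$-neighbor of $x_R'$, which lies in the left partition), since otherwise your supplement from $\{u_L,w_R\}$ alone need not reach $2$.
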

\begin{proof}
Recall that~$ I_L^=,I_L^{\neq} $ also contain~$ v_R $, since a right degree-1 endpoint exists after creation of~$ X $.
We distinguish four cases, which are restated below before their respective analysis.
Assume that~$ u_L{\neq} x_L $ holds.
If~$ v_R $ is a node in~$ X $, then we have~$ v_R{\neq} x_R' $ or~$ v_R{=}x_R' $, which are the first two cases.
In the third case~$ v_R $ is not a node in~$ X $.
If~$ u_L{=}x_L $ holds, then~$ v_R $ is not a node in~$ X $.
Why?
After creation of~$ X $ all~\m- and~\mopt-edges of~$ X $ but those incident with~$ u_L{=}x_L $ and~$ u_R{=}x_R' $ are still in the graph.
So the only \m-covered~$ X $-node which could have degree one now is the~\mopt-neighbor of~$ x_R' $, call it~$ l $.
But~$ v_R\neq l $, since~$ l $ is in the left partition.

{\bf\boldmath $u_L {\ne} x_L$, $v_R$ in $ X$, $v_R {\ne} x_R'$:}
No common or donation transfer leaves~$v_R$, since~$ v_R $ has degree one when it gets matched and belongs to the same path as~$ u_L $.
Thus~$ v_R $ achieves increase~$\maxdeg-2$ for partition~$ L $ of~$ X $.
Since we have~$ v_R\neq x_R' $, the balance increase of~2 for nodes~$x_R', x_L, w_R$ by \cref{prop:independstep} pushes the total increase to at least~$\maxdeg$.
Observe that we have~$ \lbrace v_R,x_R',x_L,w_R\rbrace\subseteq I_L^{\neq}$.

{\bf\boldmath $u_L {\ne} x_L$, $v_R$ in $ X$, $v_R {=} x_R'$:}
Note that~\edge{x_R',u_L} is an~\mopt-edge of~$ X $.
As in the first case, zero debits leave~$v_R$ and~$ v_R $ achieves increase~$\maxdeg-2$.
So we are done if~$w_R$ receives~2 common credits, since then we have~$ \lbrace v_R,w_R\rbrace\subseteq I_L^{\neq}$.
From here on assume that~$w_R$ receives less than two common credits.
By \cref{prop:numstatcred} node~$ w_R $ receives at least one common credit.
A further increase of~1 is obtained if~$u_L$ or~$x_L$ has less than~$\maxdeg-2$ common debits.
Here we have~$ \lbrace v_R,w_R,u_L,x_L\rbrace\subseteq I_L^{\neq} $.

If both~$u_L$ and~$x_L$ have~$\maxdeg-2$ common debits, then we show a contradiction to \cref{lemma:samesidedeg1}:
we argue that, after~$ x_R',x_L $ are matched, there is a right \mbox{degree-1} node and all right degree-1 nodes are endpoints.
After~$ x_R',x_L $ are matched, the destination endpoints of common debits from~$ u_L,x_L $ have degree at most one.
Node~$ w_R $ has degree at most one as well, by \cref{prop:numstatcred}, since we have assumed that~$ w_R $ receives only one common credit.
So the number of endpoints neighboring~$u_L$ (in~\g) is~$\maxdeg-2$, while~$x_L$ has~$\maxdeg-1$ neighbors (in~\g) which are endpoints.
Therefore after~$ x_R',x_L $ are matched an endpoint neighbor of~$ x_L $ (in~\g) has degree one.
Also, all degree-1 nodes in the right partition are endpoints.

{\bf\boldmath $u_L {\ne} x_L$, $v_R$ not in $ X$:}
At most~$\maxdeg{-}3$ common transfers leave~$u_L$, since no common transfer goes from~$u_L$ to~$v_R$.
Therefore~$ u_L $ achieves an increase of~1.
Observe that after creation at most~$\maxdeg-3$ \mbox{degree-1} endpoints exist in the right partition.
Hence by \cref{def:dyntrans}, a donation transfer~\dedge{v_R,u_L} moves~$ \maxdeg{-}3 $ coins to~$ X $.
Using the increase of~2 for nodes~$ x_R',x_L,w_R $ due to \cref{prop:independstep}, the total increase is~\maxdeg.
The increase is obtained from nodes~$ \lbrace u_L,v_R,x_R',x_L,w_R\rbrace\subseteq I_L^{\neq} $.

{\bf\boldmath $u_L {=} x_L$ ($ v_R $ not in $ X $):}
Again, at most~$ \maxdeg-3 $ common debits leave~$ u_L $.
Recall that each destination node of a common debit from~$ u_L $ has degree exactly one after creation of~$ X $.
Also, node~$ w_R $ has degree one after creation if and only if~$ w_R $ receives exactly one common credit, as \cref{prop:numstatcred} shows.

Assume that~$ w_R $ receives two common credits or~$ u_L $ has at most~$ \maxdeg-4 $ common debits, in which case the increases of~$ u_L $ and~$ w_R $ sum up to at least three, since~$ w_R $ receives at least one common credit by \cref{prop:numstatcred}.
After creation the right partition contains at most~$ \maxdeg-3 $ degree-1 endpoints.
By \cref{def:dyntrans}, a donation transfer~\dedge{v_R,u_L} moves additional~$ \maxdeg-3 $ coins to~$ X $.
We are done with an increase of at least~\maxdeg for partition~$ L $ of~$ X $ by nodes~$ \lbrace u_L,w_R,v_R\rbrace\subseteq I_L^{=} $.

Lastly, assume that~$ w_R $ receives one common credit and~$ \maxdeg-3 $ common debits leave~$ u_L $, i.e. the increases of~$ u_L $ and~$ w_R $ sum up to at least two.
Observe that after creation the right partition contains~$ \maxdeg-2 $ many \mbox{degree-1} endpoints and that~$ v_R $ is the only right degree-1 node which is not an endpoint.
Therefore, by \cref{def:dyntrans}, a donation transfer~\dedge{v_R,u_L} moves additional~$ \maxdeg-2 $ coins to~$ X $.
We get an increase of at least~\maxdeg for~$ L $ of~$ X $ by nodes~$ \lbrace u_L,w_R,v_R\rbrace\subseteq I_L^{=}$.
\end{proof}

Next, we prove that the balance of singletons is large enough.

\begin{lemma}
\label{lemma:singleton}
A singleton pays at most~$ 2(\maxdeg-2) $ coins and therefore satisfies \cref{eqn:balanceboundsingleton}.
\end{lemma}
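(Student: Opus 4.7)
Singleton nodes receive no credits: common credits by definition target path endpoints (and a singleton node, being matched, is not a path endpoint), while donation credits target the path-creation nodes $u_L^Y, u_R^Y$ (and a singleton node belongs to no path). So $c_X = 0$ and it suffices to bound the total debits paid by $s_L, s_R$ by $2(\maxdeg-2)$. Each singleton node pays either (a) at most one coin per incident $F$-edge via common debits, or (b) at most one donation debit of $\maxdeg-3$ or $\maxdeg-2$ coins if selected as some $v_L^Y$ or $v_R^Y$. These cases are mutually exclusive: a donation source is matched with degree~$1$, so no $F$-edges remain to carry common debits from it.

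The plan is a case distinction on the degrees of $s_L, s_R$ in the reduced graph when \karpsipser selects $\{s_L, s_R\}$. When both have degree~$1$, no common debits occur and each pays at most $\maxdeg-2$ donation coins, so the total is at most $2(\maxdeg-2)$. When both have degree~$\geq 2$, \karpsipser's preference for degree-$1$ edges forces the reduced graph to have no degree-$1$ node pre-step, hence neither node is a donation source. The bound on each side's common debits uses \cref{lemma:samesidedeg1}: if $s_L$ has $c_L \geq 1$ common debits, then each target has pre-step degree exactly~$2$ and post-step degree~$1$; the lemma applied post-step yields a right degree-$1$ non-endpoint $\zeta$; since no pre-step degree-$1$ node existed, $\zeta$ must have lost an edge in the singleton step, and by bipartiteness this edge is $\{s_L,\zeta\}\in F$. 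Thus $\zeta$ is an $F$-neighbor of $s_L$ distinct from the (endpoint) common-debit targets, so $c_L \leq \maxdeg-2$, and symmetrically $c_R \leq \maxdeg-2$.

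The delicate case is (WLOG) $s_L$ of degree~$\geq 2$ and $s_R$ of degree~$1$, where $s_R$ may donate up to $\maxdeg-2$ coins and $s_L$ may a priori pay up to $\maxdeg-1$ common debits, naively overshooting by one coin. This is handled via \cref{def:dyntrans}: $s_R$'s donation reaches $\maxdeg-2$ coins only when the pre-step right partition contains exactly $\maxdeg-2$ degree-$1$ endpoints besides $s_R$. In that subcase I rerun the post-step samesidelemma argument, which forces a right non-endpoint degree-$1$ node $\zeta$; since every pre-step right degree-$1$ node other than $s_R$ is an endpoint, $\zeta$ has pre-step degree $\geq 2$ and so must be an $F$-neighbor of $s_L$ that is not a common-debit target, pinning $c_L \leq \maxdeg-2$ and restoring the bound. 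If $s_R$'s donation is only $\maxdeg-3$ coins (or zero), the trivial bound $c_L \leq \maxdeg-1$ already yields $(\maxdeg-1)+(\maxdeg-3) = 2(\maxdeg-2)$.

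The main obstacle is this final trade-off: bounding the two debit types in isolation falls short by one coin, so the proof must exploit the coupling between them. The key structural fact is that the configuration triggering the larger donation ($\maxdeg-2$ coins) also triggers \cref{lemma:samesidedeg1} in a way that produces an \emph{additional} non-endpoint $F$-neighbor of $s_L$, which suppresses $s_L$'s common-debit count by exactly the one coin needed.
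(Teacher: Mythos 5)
Your proof is correct and follows essentially the same route as the paper: the same case split (your classification by the two nodes' degrees at matching time is equivalent to the paper's classification by debit type), the same use of \cref{lemma:samesidedeg1} to cap common debits at $\maxdeg-2$, and the same coupling of the $\maxdeg-2$ donation condition in \cref{def:dyntrans} with the common-debit count in the mixed case. The only step left implicit is the counting fact that, with $\maxdeg-1$ common-debit targets and only $\maxdeg-2$ pre-step degree-$1$ endpoints, at least one target survives with degree exactly one so that \cref{lemma:samesidedeg1} actually applies post-step --- the paper states this explicitly, and it follows immediately from the numbers you already have.
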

\begin{proof}
Recall that a node has either common or donation debits, but not both, and at most one donation debit leaves each node.
We distinguish three cases for nodes~$ z_L,z_R $ of a singleton:
both have a donation debit, or both have common debits, or w.l.o.g. a donation debit~\dedge{z_L,u_R} leaves~$ z_L $ and~$ z_R $ has common debits.

\textbf{\boldmath A Donation Debit Leaves Each of~$ z_L,z_R $:}
Exactly two donation debits leave the singleton.
By definition, each moves at most~$ \maxdeg-2 $ coins.

\textbf{\boldmath Both~$ z_L,z_R $ Have Common Debits:}
We show that each of~$ z_L,z_R $ has at most~$ \maxdeg{-}2 $ common debits.
Assume that~$ z_L $ has~$ \maxdeg-1 $ common debits.
When~$ z_L,z_R $ are matched, both are incident with an~$ F $-edge and by definition of \karpsipser all nodes have degree at least two.
Thereafter the destination nodes of common debits from~$ z_L $ have degree one, and these endpoints are the only degree-1 nodes in their partition since the only other~\g-neighbor of~$ z_L $ is~$ z_R $.
A contradiction to \cref{lemma:samesidedeg1}.
An analogous argument applies to~$ z_R $.

\textbf{\boldmath A Donation Debit Leaves~$ z_L $ and~$ z_R $ Has Common Debits:}
We are done if~\dedge{z_L,u_R} moves at most~$ \maxdeg{-}3 $ coins,  since at most~$ \maxdeg-1 $ common debits leave~$ z_R $.
If~\dedge{z_L,u_R} moves~$ \maxdeg{-}2 $ coins, then~$ z_R $ has at most~$ \maxdeg{-}2 $ common debits:
assuming that~$ z_R $ has~$ \maxdeg-1 $ common debits, say to nodes~$ w_L^1,\dots,w_L^{\maxdeg-1} $, we show a contradiction.
By definition of~\dedge{z_L,u_R}, before~$ z_L $ gets matched the partition of~$ z_L $ contains~$ \maxdeg-2 $ degree-1 path endpoints and no other degree-1 nodes but~$ z_L $.
But then after~$ z_L $ is matched, at least one of the~$ w_L^i $ has degree one, since the degree of at most~$ \maxdeg-2 $ endpoints was decreased to zero.
Furthermore, since~$ z_L $ is now matched, all degree-1 nodes in the left partition are path endpoints.
This contradicts \cref{lemma:samesidedeg1}.
\end{proof}

To complete the proof of \cref{thm:karpsipser} we have to show \cref{lemma:samesidedeg1,prop:numstatcred,prop:independstep}.
We start with the result that solely depends on the definition of path endpoints and the bipartiteness of~\g.

\begin{samesidelemma}
\samesidelemmatext
\end{samesidelemma}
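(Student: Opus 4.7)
The plan is to argue by contradiction. Let $t_0$ be the moment at which the lemma is applied and suppose that at $t_0$, $w \in R$ is a degree-1 path endpoint but every degree-1 node of $R$ is a path endpoint. Denote this set by $W_1 \subseteq R$ and let $u \in L$ be the unique neighbor of $w$ in the current reduced graph. Since $w$ is never matched, $w$ must eventually become isolated, which forces $u$ to be matched at some first step $t^* > t_0$, to some $w^* \neq w$.

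I would then analyse every intermediate step~$t$ with $t_0 < t < t^*$. Because $w$ is still degree-1 at the start of such a step, \karpsipser picks an edge $\edge{a,b}$ incident to a degree-1 node. If $b \in R$ were degree-1, it would lie in~$W_1$ by the inductive hypothesis, hence be an endpoint, yet it would be matched---a contradiction. So $a \in L$ has degree~$1$ and $b \in R$ has degree at least~$2$; note $a \neq u$, for otherwise $b=w$.

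The main obstacle is to establish that the hypothesis ``every degree-1 $R$-node lies in~$W_1$'' is preserved by such a step. Bipartiteness is essential here: deleting the matched pair $a,b$ cannot decrease the degree of any other $R$-node, because the only removed $L$-node is $a$ and $a$'s unique $R$-neighbor is the (removed) node $b$. Hence by induction the set of degree-1 $R$-nodes is constantly $W_1$ throughout $[t_0, t^*-1]$, and all of them remain endpoints.

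Finally, at step~$t^*$ itself, the edge $\edge{u,w}$ is still present, so $u$ has at least two neighbors and is not degree-1. Thus the picked edge $\edge{u,w^*}$ qualifies only because $w^* \in R$ is degree-1 just before $t^*$. By the invariant $w^* \in W_1$, so $w^*$ is an endpoint---contradicting that $w^*$ is matched at~$t^*$.
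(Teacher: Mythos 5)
Your proof is correct and follows essentially the same route as the paper's: both argue by contradiction that, as long as every degree-1 node of $R$ is a path endpoint, \karpsipser must repeatedly match a degree-1 node of $L$, which by bipartiteness leaves the degrees in $R$ (hence the set of degree-1 $R$-nodes) unchanged, so that $w$'s unique neighbor $u$ can never be matched and $w$ can never be isolated. Your version merely makes the step-by-step induction and the final contradiction at $t^*$ more explicit than the paper's ``by repeating the argument'' phrasing.
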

\begin{proof}
Assume that all degree-1 nodes in the partition of~$ w $ are path endpoints.
Since these are never matched, an edge with a degree-1 node~$ u $ in the other partition is picked next, say~$ u $ gets matched with~$ v $.
Observe that~$ v $ is in the partition of~$ w $ and that all degrees in this partition, but that of~$ v $, are not changed.
So the set of degree-1 nodes in the partition of~$ w $ remains unchanged.
By repeating the argument the degree of~$ w $ is never decreased to zero.
A contradiction.
\end{proof}

Next, we prove the result on the number of common credits to an endpoint.

\begin{propnumstatcred}
\propnumstatcredtext
\end{propnumstatcred}
\begin{proof}
First, recall that no degree-1 node is matched in the creation step of the path of~$ w_R $.
At creation, node~$ w_R $ is not yet isolated and consequently has degree at least two as well.
Since~\g is bipartite, edges incident with~$ w_R $ are removed in pairwise different steps.
Hence there is a step when~$ w_R $ has degree two.

An edge is not a common credit to~$ w_R $ if it is removed before~$ w_R $ has degree two.
Thereafter, each~$ F $-edge removed from~$ w_R $ is a common credit to~$ w_R $.
Hence if~$ w_R $ has degree two when~$ x_L $ is already matched, then both remaining~$ F $-edges are common credits.
If~$ w_R $ has degree two when~$ x_L $ is not yet matched, then~$ w_R $ has only one incident~$ F $-edge and receives one common credit, and after~$ x_L $ is matched~$ w_R $ has degree at most one.
\end{proof}

\begin{propindependstep}
\propindependsteptext
\end{propindependstep}
\begin{proof}
Observe that no donation debit leaves~$ x_L $, since~$ x_L $ has degree at least two when it is matched.
We distinguish if a donation debit leaves $x_R'$ or not. 

{\bf \boldmath No Donation Debit Leaves~$ x_R' $:}
If~$ w_R $ receives two common credits, then we are done.
Otherwise~$ w_R $ receives exactly one common credit, by \cref{prop:numstatcred}.
Therefore it suffices to find an additional increase of one.
If one of~$ x_R',x_L $ has less than~$ \maxdeg-2 $ common debits, then we are done.
So let each of~$ x_R',x_L $ have~$ \maxdeg-2 $ common debits.
Consequently each of~$ x_R',x_L $ is incident with~$ \maxdeg-2 $ many~$ F $-edges just before being matched, i.e. both their degrees---and hence all degrees---are at least two.
After~$ x_R',x_L $ are matched, the destination nodes of common debits from~$ x_L $ have degree exactly one, since their degrees are decreased by exactly one.
Since~$ w_R $ receives one common credit, node~$ w_R $ also has degree one as a consequence of \cref{prop:numstatcred}.
Hence all degree-1 nodes in the right partition are path endpoints.
A contradiction to \cref{lemma:samesidedeg1}.

{\bf \boldmath A Donation Debit~\dedge{x_R',u_L} Leaves~$ x_R' $:}
Recall that~no common debit leaves~$ x_R' $, since~$ x_R' $ is matched when it has degree one.
If~\dedge{x_R',u_L} moves~$ \maxdeg-3 $ coins, then~$ x_R' $ increases the balance by~1.
Using a common credit to $ w_R $, which exists by \cref{prop:numstatcred}, we get a total increase of at least~2.

Now assume that~\dedge{x_R',u_L} moves~$ \maxdeg-2 $ coins.
If~$ w_R $ receives two common credits, or~$ w_R $ receives one common credit and at most~$ \maxdeg-2 $ common debits leave~$ x_L $, then we are done.
So assume that~$ w_R $ receives one common credit and~$ \maxdeg-2 $ common debits leave~$ x_L $, say to nodes~$ w_R^1,\dots,w_R^{\maxdeg-2} $.
We show a contradiction to \cref{lemma:samesidedeg1}.
After~$ x_R',x_L $ are matched, the~$ w_R^i $ have degree at most one by definition, and~$ w_R $ has degree at most one due to \cref{prop:numstatcred}.
We claim that at least one of~$ w_R $ and the~$ w_R^i $ has degree exactly one after~$ x_R',x_L $ are matched.
Why?
Since~\dedge{x_R',u_L} moves~$ \maxdeg-2 $ coins, before~$ x_R',x_L $ are matched the right partition contains exactly~$ \maxdeg-2 $ \mbox{degree-1} endpoints.
Hence thereafter at most~$ \maxdeg-2 $ of~$ w_R $ and the~$ w_R^i $ are isolated, as claimed.
Furthermore, before~$ x_R',x_L $ are matched node~$ x_R' $ is the only degree-1 node in its partition which is not an endpoint, and thereafter~$ x_R' $ is matched.
So after~$ x_R',x_L $ are matched all degree-1 nodes in the right partition are endpoints.
This contradicts \cref{lemma:samesidedeg1}.
\end{proof}


\section{A Performance Bound for Degree Sensitive Algorithms}
\label{sect:adaprio_upper_bound}
In this section we prove \cref{thm:degsens}.
We describe the \emph{adaptive priority game} between algorithm~$ A $ and an adversary~$ B $, who processes the priority orders submitted by~$ A $ in order to construct a hard input instance.
In each round, adversary~$ B $ presents the highest priority data item~\dataitem{u,d_u,v} in the current order which should be in the graph:
Each presented data item must be \emph{consistent} with the previous construction, i.e. giving the final construction as input to~$ A $ must result in the same sequence of submitted priority orders and received data items.

\smallskip

We first prove our~\theratio bound for bipartite graphs with degrees at most~$ \maxdeg{\geq}4 $ and without a perfect matching.
Thereafter we modify~$ B $ such that the construction also works for~$ \maxdeg=3 $, and such that the graph has a perfect matching.

\smallskip

Adversary~$ B $ constructs a graph which contains~$ k $ \emph{traps}~$ T_1,T_2,\dots,T_k $.
For each trap~$ T_i $ algorithm~$ A $ will insert~$\maxdeg$ edges into its matching (crossed edges in \cref{fig:threadSmallLollipops}), whereas~$ T_i $ contains~$2 \maxdeg {-} 2$ edges of a maximum matching (double edges).
Besides traps the graph contains a constant number of additional nodes and edges.
Hence~$ A $ achieves approximation ratio at most~$ \theratio{+}\varepsilon $ for large~$ k $.

Trap~$ T_i $ contains a \emph{left cycle} on nodes~$ c_1^i,c_2^i,c_3^i,c_4^i $ which is connected via an edge~\edge{c_1^i,p_1^i} to a \emph{left path} on nodes~$ p_1^i,p_2^i $.
Trap~$ T_i $ also contains a \emph{right cycle} on nodes $ d_1^i,d_2^i,d_3^i,d_4^i$ connected via~\edge{d_1^i,q_1^i} to a \emph{right path} on nodes~$q_1^i,q_2^i $.
The left path is connected to the right cycle via edges~$ \edge{p_2^i,d_1^i},\edge{p_2^i,d_3^i} $, and analogously the right path of~$ T_i $ is connected to the left cycle of the next trap~$ T_{i+1} $ via edges~$ \edge{q_2^i,c_1^{i+1}},\edge{q_2^i,c_3^{i+1}} $;
the right path of the last trap~$ T_k $ is connected to an extra cycle on nodes~$ e_1,e_2,e_3,e_4 $ via edges $\lbrace q^k_2, e_1 \rbrace, \lbrace q^k_2, e_3 \rbrace $;
an extra node~$ e_0 $ connects to the left cycle nodes~$ c_1^1,c_3^1 $ of the first trap.
The left and right cycles in~$ T_i $ are connected by~$ \Lambda= \maxdeg{-}4 $ many length-three paths on nodes~$ w_j^i,x_j^i,y_j^i,z_j^i $ via edges $ \edge{c_1^i,w_j^i},\edge{c_3^i,w_j^i} $ and $ \edge{z_j^i,d_1^i},\edge{z_j^i,d_3^i} $ for~$ 1\leq j\leq \Lambda $.
During the game~$ B $ will add more edges to this graph, depending on the actions taken by~$ A $.

\begin{figure}[htbp!]
\centering
\large
\scalebox{.55}{
\begin{tikzpicture}[x=.75cm]
\input{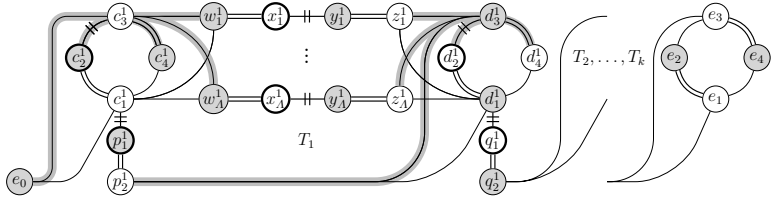}
\tikzstyle{every node} = [circle, fill=white,draw=black,minimum size=18pt,inner sep=.25pt];
\node[leftside] (p2s) {$ e_0 $};

\node[] at ($ (p2s)+(3.25,2) $) (c11) {$c_1^1$};
\node[selected,leftside,above of=c11,left  of=c11] (c21) {$c_2^1$};
\node[above of=c11,above of=c11] (c31) {$c_3^1$};
\node[leftside,above of=c11,right of=c11] (c41) {$c_4^1$};
\node[leftside,selected] (p11) at ($ (c11)+(0,-1) $) {$p_1^1$};
\node[] (p21) at ($ (p11)+(0,-1) $) {$p_2^1$};

\draw
(c11) edge[newopt, bend left] (c21)
(c21) edge[highlight,bend left,alg] (c31)
(c31) edge[highlight,newopt,double=lightgray, bend left] (c41)
(c41) edge[bend left] (c11)
(c11) edge[alg] (p11)
(p11) edge[newopt] (p21)
;

\node[leftside,] (c12) at ($ (c11)+(12,0) $) {$d_1^1$};
\node[selected,above of=c12,left  of=c12] (c22) {$d_2^1$};
\node[leftside,above of=c12,above of=c12] (c32) {$d_3^1$};
\node[above of=c12,right of=c12] (c42) {$d_4^1$};
\node[selected] (p12) at ($ (c12)+(0,-1) $) {$q_1^1$};
\node[leftside,] (p22) at ($ (p12)+(0,-1) $) {$q_2^1$};

\draw
(c12) edge[newopt, bend left] (c22)
(c32) edge[highlight,newopt,double=lightgray, bend left] (c42)
(c42) edge[bend left] (c12)
(c12) edge[alg] (p12)
(p12) edge[newopt] (p22)
;

\coordinate[right=3.25 of c12] (c13) {};
\coordinate[above of=c13,above of=c13] (c33) {};
\coordinate (p13) at ($ (c13)+(0,-1) $) {};
\coordinate (p23) at ($ (p13)+(0,-1) $) {};

\node[] at ($ (c13)+(3.5,0) $) (c1e) {$ e_1 $};
\node[leftside,above of=c1e,left of=c1e] (c2e) {$ e_2 $};
\node[above of=c1e,above of=c1e] (c3e) {$ e_3 $};
\node[leftside,above of=c1e,right of=c1e] (c4e) {$ e_4 $};

\draw
(c1e) edge[newopt,bend left] (c2e)
(c2e) edge[bend left] (c3e)
(c3e) edge[newopt, bend left] (c4e)
(c4e) edge[bend left] (c1e)
;

\draw[highlight,rounded corners=10pt]
(p2s) -- ($ (p2s)+(1.05,0) $) -- ($ (c31)+(-2.2,0) $) -- (c31)
;
\draw[rounded corners=10pt]
(p2s) -- ($ (p21)+(-1.5,0) $) -- (c11)
;
\draw[highlight,rounded corners=34pt]
(p21) -- ($ (p22)+(-2.2,0) $) -- ($ (c32)+(-2.2,0) $) -- (c32)
;
\draw[rounded corners=30pt]
(p21) -- ($ (p22)+(-1.5,0) $) -- (c12)
(p22) -- ($ (p23)+(-1.5,0) $) -- ($ (c33)+(-1.5,0) $) -- (c33)
(p22) -- ($ (p23)+(-1.5,0) $) -- (c13)
(p23) -- ($ (p22)+(5.175,0) $) -- ($ (c3e)+(-2,0) $) -- (c3e)
(p23) -- ($ (p22)+(6,0) $) -- (c1e)
;

\node[mylabel] at ($ (p12)+(3.75,2) $) {$ T_2,\dots,T_k $};

\node[leftside,] (b11) at ($ (c31)+(3,0) $) {$ w_1^1 $};
\node[selected] (b21) at ($ (b11)+(2,0) $) {$ x_1^1 $};
\node[leftside,] (b31) at ($ (b21)+(2,0) $) {$ y_1^1 $};
\node (b41) at ($ (b31)+(2,0) $) {$ z_1^1 $};

\draw
(b11) edge[newopt] (b21)
(b21) edge[alg]
(b31)
(b31) edge[newopt] (b41)
;

\node[leftside,] (b1l) at ($ (b11)+(0,-2) $) {$ w_\Lambda^1 $};
\node[selected] (b2l) at ($ (b1l)+(2,0) $) {$ x_\Lambda^1 $};
\node[leftside,] (b3l) at ($ (b2l)+(2,0) $) {$ y_\Lambda^1 $};
\node (b4l) at ($ (b3l)+(2,0) $) {$ z_\Lambda^1 $};

\draw
(b1l) edge[newopt] (b2l)
(b2l) edge[alg] 
(b3l)
(b3l) edge[newopt] (b4l)
;

\draw
(b11) edge[highlight] (c31)
(b11) edge[out=270,in=0]  (c11)
(b1l) edge[highlight,out=90,in=0] (c31)
(b1l) edge[] (c11)

(b41) edge[highlight] (c32)
(b41) edge[out=270,in=180] (c12)
(b4l) edge[highlight,out=90,in=180]  (c32)
(b4l) edge[] (c12)
;

\draw
(b11) edge[] (c31)
(c22) edge[highlight,bend left,alg] (c32)

(b41) edge[] (c32)

(c31) edge[newopt,double=lightgray, bend left] (c41)
(c21) edge[bend left,alg] (c31)

(b11) edge[out=270,in=0]  (c11)
(b41) edge[out=270,in=180] (c12)
;
\draw[rounded corners=34pt]
(p21) -- ($ (p22)+(-2.2,0) $) -- ($ (c32)+(-2.2,0) $) -- (c32)
;

\node[mylabel] at ($ (b21)+(1,-.85) $) {\huge$ \vdots $};
\node[mylabel] at ($ (b21)+(1,-3) $) {$ T_1 $};
\end{tikzpicture}
}
\caption{The construction of adversary~$ B $.
Algorithm~$ A $ receives data items for bold nodes.
The partitions of the graph are marked with white and gray nodes.
Gray edges form connected components in rounds~$ \Lambda{+}2$ resp.~$\Lambda{+}4 $ of the adaptive priority game.}
\label{fig:threadSmallLollipops}
\end{figure}

To start the game, adversary~$ B $ announces the number~$ k{\cdot}(12+4\Lambda){+}5 $ of nodes.
The construction of~$ B $ proceeds such that after the first~\maxdeg rounds all nodes in~$ T_1 $ but~$ q_2^1 $ are isolated.
The graph to be constructed thereafter is one trap `shorter' with~$ q_2^1 $ instead of~$ e_0 $ connected to the leftmost trap.
Adversary~$ B $ repeats its strategy for~$ T_2,T_3,\dots,T_k $.
After~$ B $ finishes the construction of~$ T_k $, algorithm~$ A $ scores at most two edges for nodes~$ q_2^k,e_1,e_2,e_3,e_4 $.

\smallskip

Observe that in the first round the minimum degree is two.
In each of rounds~$ 1{\leq}j{\leq}\Lambda $, adversary~$ B $ presents the highest priority data item~\dataitem{u,d_u,v} with~$ 2{\leq} d_u{\leq}\maxdeg $ in the respective priority order submitted by~$ A $.
Adversary~$ B $ then relabels nodes in the graph such that~$ u=x_j^1 $ and~$ v=y_j^1 $ holds, i.e. algorithm~$ A $ picks the crossed edges in the length-three paths.

In each round~$ B $ has committed to~$ u=x_j^1 $ having current degree~$ d_u $.
Since~$ d_u $ may be larger than two, adversary~$ B $ inserts additional edges incident with~$ x_j^1 $ into the graph in \cref{fig:threadSmallLollipops}.
The~$ d_u-2 $ additional edges connect~$ x_j^1 $ with arbitrary nodes in the set $\{ w_1^1,\dots,w_{j-1}^1,w_{j+1}^1,\dots,w_{\Lambda}^1,c_2^1,c_4^1,q_2^1 \}$.
This set has cardinality~$ \maxdeg-2\geq d_u-2 $ and only contains nodes outside the partition of~$ x_j^1 $.

The additional edges are consistent:
In previous rounds~$ A $ could not gather knowledge about the neighborhood of~$ u=x_j^1 $, or any other still unmatched node, therefore the additional edges do not have effect on previous actions taken by~$ A $.

Edges incident with~$ u,v $---including additional edges---are removed from the graph in the next round, hence in round~$ j+1 $ the minimum degree is two, again.

The \g-degrees of the nodes receiving additional edges are increased to at most~$ \maxdeg{-}1 $ during rounds~$ 1{\leq}j{\leq}\Lambda $:
The~$ w_j^1 $ have degree at most~$ 3{+}\Lambda{-}1 {=}\maxdeg{-}2$, both $ c_2^1,c_4^1 $ have degree at most~$ 2{+}\Lambda{=}\maxdeg{-}2 $, and~$ q_2^1 $ has degree at most~$ 3{+}\Lambda{=}\maxdeg{-}1$.

\smallskip

In round~$ \Lambda+1 $ adversary~$ B $ again presents the highest priority item~\dataitem{u,d_u,v} with~$ 2\leq d_u\leq\maxdeg $ in the submitted order.
This time~$ B $ relabels nodes such that~$ u{=}p_1^1 $ and~$ v{=}c_1^1 $ (hence~$ A $ picks the crossed edge connecting the left cycle and path), and inserts~$ d_u-2\leq\maxdeg-2$ additional edges connecting~$ u $ with arbitrary nodes in the set~$\{z_1^1,\dots,z_{\Lambda}^1, d_2^1,d_4^1\} $.
The \g-degrees of nodes receiving an additional edge are increased by only one, i.e. they do not exceed~$ 4\leq\maxdeg $.

In round~$ \Lambda{+}2 $ a star centered at~$ c_3^1 $ is disconnected from the rest of the graph.
Since~$ A $ computes a maximal matching, these star nodes get isolates when~$ A $ matches~$ c_3^1 $.
W.l.o.g. we assume that~$ A $ isolates these nodes in round~$ \Lambda{+}2 $.

Similarly, adversary~$ B $ constructs the right cycle and path.
In round~$ \Lambda{+}3 $, algorithm~$ A $ matches~$ u{=}q_1^1 $ with~$ v{=}d_1^1 $, where additional~$ d_u-2 $ edges connect~$ q_1^1 $ with arbitrary nodes in the set~$ \{w_1^2,\dots,w_{\Lambda}^2,c_2^2,c_4^2\} $ of left path and cycle nodes in trap~$ T_2 $.
In round~$ \Lambda{+}4{=}\maxdeg $ a star centered at~$ d_3^1 $ is disconnected from the rest of the graph.
W.l.o.g. again, algorithm~$ A $ scores this edge in this round.

\smallskip

Adversary~$ B $ repeats its strategy for the construction of trap~$ T_2 $.
As before,~\g-degrees of nodes which receive additional edges are not increased above~\maxdeg.
However, we have to pay attention to nodes~$ w_1^2,\dots,w_{\Lambda}^2,c_2^2,c_4^2 $.
For these nodes adversary~$ B $ might already have constructed one additional edge from $q^1_1$.
So additional edges in~$ T_2 $ increase the degrees of these nodes to at most~$ \maxdeg $---and not to at most~$ \maxdeg-1 $ as discussed for~$ T_1 $.
This applies analogously to~$ T_3,T_4,\dots,T_k $.

\smallskip

\textbf{\boldmath $ \maxdeg=3 $:}
Paths on nodes~$ w_j^i,x_j^i,y_j^i,z_j^i $ do not exist.
Left and right paths have four nodes~$ p_1^i,\dots,p_4^i $ resp.~$ q_1^i,\dots,q_4^i $ instead of two, and are still connected to cycles via~$ \edge{p_1^i,c_1^i} $ resp.~$ \edge{q_1^i,d_1^i} $.
Edges~$ \edge{ p_2^i, d_1^i},\edge{ p_2^i, d_3^i} $ and~$ \edge{q_2^i, c_1^{i+1}},\edge{q_2^i, c_3^{i+1}}$ connecting paths with nodes of the `next' cycle are replaced by~$ \edge{ p_4^i, q_2^i},\edge{ p_4^i, d_3^i},$ resp.~$\edge{q_4^i, p_2^{i+1}},\edge{q_4^i, c_3^{i+1}}$.
During the game adversary~$ B $ does not insert any additional edges.
All nodes have degrees two or three.
In particular, nodes~$ p_2^1,p_3^1$ have degree three resp. two:
In the first round~$ B $ presents the highest priority data item~\dataitem{u,d_u,v} and relabels nodes such that~$ A $ picks edge~\edge{p_2^1,p_3^1}, no matter if~$ d_u{=}2 $ or~$ d_u{=}3 $ holds.
The remainder of the left cycle and path in~$ T_1 $ is now separated from the rest of the graph.
Therein~$ A $ can pick at most two edges:
Algorithm $ A $ scores three out of four.
Adversary~$ B $ repeats this construction analogously for edges~\edge{q_2^1,q_3^1}, \edge{p_2^2,p_3^2}, \edge{q_2^2,q_3^2}, $ \dots $,~\edge{q^k_2, q^k_3} and their paths and cycles.
Hence we obtain the claimed convergence to~$ \threeOverFour=\theratio $.

\smallskip

\emph{Note.}
Since in case~$ \maxdeg=3 $ the set of edges is fixed, we can strengthen our bound by giving the algorithm additional a priori knowledge on the input, namely the number of edges.
Moreover, observe that the algorithm cannot counter the adversary's strategy even if the degree sequence of the input graph is given as a~priori knowledge.

\medskip

\textbf{Perfect Matching:}
Adversary~$ B $ replaces the extra node~$ e_0 $ with a length-four cycle~$ C $ and connects~$ c_1^1,c_3^1 $ (resp.~$ p_2^1,c_3^1 $ for~$ \maxdeg=3 $) to different nodes of~$ C $ such that degrees in~$ C $ are two and three and the graph is bipartite (similar to the cycle on~$e_1,\dots,e_4  $).
The construction starts as discussed.
However, when the star centered at node~$ c_3^1 $ is disconnect from the rest of the traps, it is still connected to~$ C $.
W.l.o.g. we assume that~$ A $ isolates all nodes still connected to~$ c_3^1 $ in the next rounds.
(To compensate for the two additional edges scored by~$ A $, adversary~$ B $ increases~$ k $.)
Thereafter the construction proceeds as discussed above.

\subsection{A More General Class of Algorithms}
\label{amoregeneralclassofalgorithms}
The class of degree sensitive algorithms is defined based on data items~\dataitem{u,d_u,v}, which state the degree of \emph{one} node in an edge.
The \emph{minimum degree sum} algorithm and the algorithm which selects a minimum degree node and then a minimum degree neighbor use degree information of \emph{both} nodes, hence they cannot be analyzed with the help of our class.

In this section we discuss a generalization of degree sensitive algorithms to `two-sided' algorithms.
Therefore we extend the definition of a data item to~\dataitem{u,d_u,v,d_v}, i.e.\ we allow an algorithm to specify the degrees~$ d_u $ and~$ d_v $ of both nodes~$ u $ and~$ v $ of an edge.
Otherwise, two-sided algorithms are defined exactly like degree sensitive algorithms.

\medskip

We conjecture that no two-sided algorithm can perform better than \karpsipser.
In this section we support this conjecture by showing three related inapproximability bounds.

\medskip

First, we prove that for~$ \maxdeg=3 $ two-sided algorithms cannot beat the performance of degree sensitive algorithms, i.e.\ they are bounded by the same approximation ratio.

\begin{theorem}
\label{thm:doublydegsens}
Consider a two-sided algorithm~$ A $.
For any~$ \varepsilon>0 $ there is a bipartite graph with degrees at most~$ \maxdeg=3 $ (and a perfect matching) such that~$ A $ computes a matching of size at most~$ \threeOverFour+\varepsilon=\theratio+\varepsilon $ times optimal.
\end{theorem}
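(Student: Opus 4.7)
My plan is to reuse the $\maxdeg = 3$ perfect-matching construction from the proof of \cref{thm:degsens}. The essential observation is that this graph has every degree in $\{2, 3\}$ and contains no two adjacent degree-2 nodes, so the degree pairs actually realized on edges are exactly $(2,3)$, $(3,2)$, and $(3,3)$. To lift the argument to the two-sided setting, I therefore need a relabeling strategy for each of these three pair types, in place of the two cases $d_u \in \{2, 3\}$ handled in \cref{thm:degsens}.

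In each round the adversary~$B$ would walk down the priority order submitted by the two-sided algorithm~$A$ and pick the first data item $\dataitem{u, d_u, v, d_v}$ whose degree pair actually occurs in the current reduced graph. Since no past round has revealed information distinguishing any two still-unmatched nodes of the same degree, $B$ is free to relabel $u$ and $v$ to any edge of the correct degree-pair type inside the active trap. For $(2, 3)$ and $(3, 2)$ the target is a middle path edge $\edge{p_2^i, p_3^i}$ or $\edge{q_2^i, q_3^i}$ exactly as in the single-sided proof, which sacrifices one matching edge per trap. For $(3, 3)$ the target is instead an inter-path edge such as $\edge{p_4^i, q_2^i}$ or $\edge{p_4^i, d_3^i}$; the trap structure guarantees that such an edge always exists.

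The hard part will be the new $(3, 3)$ case: I must show that forcing $A$ to pick such an edge is locally as bad as the middle-path pick. Removing $\edge{p_4^i, q_2^i}$ drops $p_3^i$, $q_1^i$, and $q_3^i$ to degree~$1$ and stays consistent with the same-partition-degree-1 argument of \cref{lemma:samesidedeg1}, so the adversary can continue to force the ensuing degree-1 cascade through the rest of the trap; a finite case analysis, in the style of the arguments in \cref{sect:proofkarpsipser}, then shows that only~$3$ of the~$4$ local-optimum edges survive inside the trap. Repeating this across $k$ traps and absorbing the $O(1)$ additive loss from the auxiliary cycle~$C$ gives an overall ratio of at most $\threeOverFour + O(1/k) \leq \threeOverFour + \varepsilon$ for $k$ sufficiently large.
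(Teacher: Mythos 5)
Your approach diverges from the paper's at exactly the point where it breaks. The paper does \emph{not} absorb the $(3,3)$ case inside the traps: it builds a new adversary~$B'$ that strips the right paths and cycles out of every trap and, crucially, reserves a separate supply of $n-k$ dedicated eight-node components (two length-four cycles joined by two edges, \cref{fig:extracomp}) whose only purpose is to answer $(3,3)$ queries; in such a component a forced $(3,3)$ pick costs one edge out of a local optimum of four. Because every gadget has eight nodes, the adversary can announce the node count $8n$ up front and decide adaptively, via the invariant on $k^*$, how many gadgets end up being traps versus extra components.

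The genuine gap in your version is the claim that routing a $(3,3)$ query to an in-trap connector edge such as $\edge{p_4^i,q_2^i}$ leaves only three of the four local-optimum edges reachable. This is false, and the cascade you invoke is precisely what saves the algorithm. After $\edge{p_4^i,q_2^i}$ is picked, nodes $p_3^i$, $q_1^i$, $q_3^i$ all have degree one; a two-sided algorithm that leads with $(3,3)$ items and otherwise prioritizes data items with a degree-$1$ endpoint then scores $\edge{p_2^i,p_3^i}$, $\edge{q_1^i,d_1^i}$, $\edge{q_3^i,q_4^i}$, then $\edge{p_1^i,c_1^i}$, and finally one edge in each residual three-node cycle fragment --- seven edges against a local optimum of eight, i.e.\ ratio $\frac{7}{8}$, not $\threeOverFour$. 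The same accounting applies to every $(3,3)$ edge available in the $\maxdeg=3$ construction of \cref{thm:degsens} (they are all connectors joining two degree-$3$ nodes, and each such pick destroys two optimum edges while leaving two three-node fragments that still admit one matching edge each), and there are only $O(1)$ of them per trap, so the adversary cannot even force many such losses. Your construction therefore only yields an inapproximability bound of roughly $\frac{7}{8}$ for this class of algorithms. To get $\threeOverFour$ you need, as the paper does, a gadget in which the $(3,3)$ pick is charged against a local optimum of four rather than eight, together with the bookkeeping that keeps the announced node count consistent with an adaptively chosen mix of gadget types.
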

\begin{proof}
We slightly change the adversary~$ B $ from the proof of \cref{thm:degsens} to obtain an adversary~$ B' $ for~$ A $.
Adversary~$ B' $ removes right paths and cycles and their incident edges from all traps, and connects the path node~$ p_4^i $ to nodes~$ c_3^{i+1},p_2^{i+1} $ of the next cycle and path.
Cycle~$ C $ and the cycle on nodes~$ e_1,e_2,e_3,e_4 $ along with all their incident edges are replaced by two edges in the first and the~$ k $-th cycle and path, namely edges~$ \edge{c_3^1,p_1^1},\edge{c_2^1,p_2^1} $ resp.~$ \edge{c_4^k,p_4^k},\edge{p_1^k,p_4^k} $, see \cref{fig:maxdeg3lollipops}.

\begin{figure}[htbp!]
\begin{minipage}[b]{.7\textwidth}
\centering
\scalebox{.5}{
\begin{tikzpicture}
\input{tikzStyles}

\node[] (h1) {$ p_4^1 $};
\node[above of=h1] (l11) {$ p_3^1 $};
\node[above of=l11] (l12) {$ p_2^1 $};
\node[above of=l12] (l13) {$ p_1^1 $};
\node[above of=l13] (l14) {$ c_1^1 $};
\node[above of=l14,left  of=l14] (l15) {$ c_2^1 $};
\node[above of=l14,right of=l14] (l16) {$ c_4^1 $};
\node[above of=l14,above of=l14] (l17) {$ c_3^1 $};

\node[ellipse,right of=h1, right of=h1, right of=h1, right of=h1] (h2) {$ p_4^2 $};
\node[above of=h2] (l21) {$ p_3^2 $};
\node[above of=l21] (l22) {$ p_2^2 $};
\node[above of=l22] (l23) {$ p_1^2 $};
\node[above of=l23] (l24) {$ c_1^2 $};
\node[above of=l24,left  of=l24] (l25) {$ c_2^2 $};
\node[above of=l24,right of=l24] (l26) {$ c_4^2 $};
\node[above of=l24,above of=l24] (l27) {$ c_3^2 $};

\coordinate[right of=h2, right of=h2, right of=h2, right of=h2] (h3);
\coordinate[above of=h3] (l31);
\coordinate[above of=l31] (l32);
\coordinate[above of=l32] (l33);
\coordinate[above of=l33] (l34);
\coordinate[above of=l34,left  of=l34] (l35);
\coordinate[above of=l34,right of=l34] (l36);
\coordinate[above of=l34,above of=l34] (l37);

\node[] at ($ (h3)+(6,0) $) (h4) {$ p_4^k $};
\node[above of=h4] (l41) {$ p_3^k $};
\node[above of=l41] (l42) {$ p_2^k $};
\node[above of=l42] (l43) {$ p_1^k $};
\node[above of=l43] (l44) {$ c_1^k $};
\node[above of=l44,left  of=l44] (l45) {$ c_2^k $};
\node[above of=l44,right of=l44] (l46) {$ c_4^k $};
\node[above of=l44,above of=l44] (l47) {$ c_3^k $};

\node[draw=none] (hl-1) at ($ (h3)+(2,0) $) {};

\draw
(l23) edge[highlight] (l24)
(l24) edge[highlight,bend left] (l25)
(l24) edge[highlight,bend right] (l26)
(l26) edge[highlight,bend right] (l27)
(l25) edge[highlight,bend left] (l27)
;

\draw[newopt]
(h1) -- (l11)
(l12) -- (l13)
(l14) edge[bend left,newopt,double=lightgray,highlight] (l15)
(l16) edge[bend right,newopt,double=lightgray,highlight] (l17)

(h2) -- (l21)
(l22) -- (l23)
(l24) edge[bend left,newopt,double=lightgray] (l25)
(l26) edge[bend right,newopt,double=lightgray] (l27)

(h4) -- (l41)
(l42) -- (l43)
(l44) edge[bend left,newopt] (l45)
(l46) edge[bend right,newopt] (l47)
;

\draw[highlight,rounded corners=10pt]
(h1) -- ($ (l27)+(-1,0) $) -- (l27)
;

\draw[rounded corners=10pt]

(h1) -- (l22)
(h1) -- ($ (l27)+(-1,0) $) -- (l27)

(h2) -- (l32)
(h2) -- ($ (l37)+(-1,0) $) -- (l37)

(l21) edge[alg] (l22)
(l23) edge[alg] (l24)
(l24) edge[bend right] (l26)
(l25) edge[bend left, alg] (l27)

(l41) edge[alg] (l42)
(l43) edge[alg] (l44)
(l44) edge[bend right] (l46)
(l45) edge[bend left, alg] (l47)
;
\draw[rounded corners=10pt,highlight]
(l17) -- ($ (l17)+(-2,0) $) -- ($ (l13)+(-2,0) $) -- (l13)
;
\draw[rounded corners=10pt]
(l15) -- ($ (l12)+(-1,0) $) -- (l12)
(l46) -- ($ (h4)+(1,0) $) -- (h4)
(l43) -- ($ (l43)+(.5,-.5) $) -- ($ (h4)+(.5,.5) $) -- (h4)

(l11) edge[alg] (l12)
(l13) edge[alg,highlight] (l14)
(l14) edge[bend right,highlight] (l16)
(l15) edge[alg,highlight,bend left] (l17)
;

\node[mylabel] at ($ (l37)+(+1,-2) $) {$ \dots $};

\coordinate[] (h-1) at ($ (h2)+(6,0) $) ;

\draw[rounded corners=10pt]
(h-1) -- ($ (l47)+(-1,0) $) -- (l47)
(h-1) -- (l42)
;
\end{tikzpicture}
}
\caption{The construction of adversary~$ B' $.
No additional edges are inserted during the game.}
\label{fig:maxdeg3lollipops}
\end{minipage}\hfill
\begin{minipage}[b]{.25\textwidth}
\centering
\scalebox{.5}{
\begin{tikzpicture}
\input{tikzStyles}
\node (c1) {};
\node[above of=c1,left of=c1] (c2) {};
\node[above of=c1,above of=c1] (c3) {};
\node[above of=c1,right of=c1] (c4) {};
\node[below of=c1,below of=c1,below of=c1,below of=c1] (c5) {};
\node[above of=c5,left of=c5] (c6) {};
\node[above of=c5,above of=c5] (c7) {};
\node[above of=c5,right of=c5] (c8) {};

\draw
(c3) edge[highlight,newopt,bend left] (c4)
(c4) edge[highlight,,bend left] (c1)
(c7) edge[highlight,newopt,bend left] (c8)
(c8) edge[highlight,,bend left] (c5)
(c4) edge[highlight,] (c8)
;

\draw
(c1) edge[newopt,bend left] (c2)
(c2) edge[,bend left] (c3)
(c3) edge[newopt,bend left,double=lightgray] (c4)
(c4) edge[,bend left] (c1)
(c5) edge[newopt,bend left] (c6)
(c6) edge[,bend left] (c7)
(c7) edge[newopt,bend left,double=lightgray] (c8)
(c8) edge[,bend left] (c5)
(c2) -- (c6)
(c4) -- (c8)
;

\node[mylabel] (spacer) at ($ (c5)+(0,-1) $) {};
\end{tikzpicture}
}
\caption{An additional component.}
\label{fig:extracomp}
\end{minipage}
\end{figure}

Before the first round, adversary~$ B' $ announces that the number of nodes is~$ 8n $ for some large integer~$ n $.
The parameter~$ k $ will be determined by~$ B' $ based on the actions taken by~$ A $.
In particular, the graph has~$ n-k $ additional connected components, each with two length-four cycles connected by two edges like in \cref{fig:extracomp}.
Observe that any edge in any connected component is incident either with a degree-2 node and a degree-3 node or with two degree-3 nodes.
No edge is incident with two degree-2 nodes.

The following \invariant holds throughout the game:
At the beginning of round~$3i+1$ there is an integer~$ 0\leq k^*\leq i $ such that algorithm~$A$ has matched or isolated all nodes in~$ i-k^* $ additional components as well as all nodes but~$ p_4^{k^*} $ in the first~$k^*$ paths and cycles.
No other nodes are matched or isolated.
For~$ i=0 $, before the first round no nodes are isolated and the \invariant holds.

Consider round~$3i+1$.
Observe that the minimum degree is two, and that every edge is incident with at least one node of degree three.
Adversary~$ B' $ presents the highest priority data item~\dataitem{u,d_u,v,d_v} in the order submitted by~$ A $ with~$ d_u,d_v\in\{2,3\} $ and at least one of~$ d_u,d_v $ equals three.

If~$ d_u{=}d_v{=}3 $, then~$ B' $ constructs the next additional component and relabels nodes such that~$ u,v $ are the two leftmost nodes in \cref{fig:extracomp}.
Observe that~$ A $ scores at most three out of four edges in this component, since therein only gray edges are left.
W.l.o.g. we assume that~$ A $ scores the additional two edges in the next two rounds.
Since~$ k^* $ is not increased, the \invariant continues to hold.

If~$ d_u\neq d_v $, then w.l.o.g. let~$ d_u=3 $ and~$ d_v=2 $.
By the \invariant, algorithm~$ A $ has already matched nodes~$ p_2^1,p_3^1,p_2^2,p_3^2,\dots,p_2^{k^*},p_3^{k^*} $ in previous rounds.
Adversary~$ B' $ relabels nodes such that~$ u{=}p_2^{k^*{+}1} $ and~$ v{=}p_3^{k^*{+}1} $ hold.
After nodes~$ p_2^{k^*{+}1},p_3^{k^*{+}1} $ are matched, the remainder of the~$k^*{+}1$-th path and cycle is disconnected from the rest of the graph, see gray edges in \cref{fig:maxdeg3lollipops}.
In this remainder~$ A $ scores at most two more edges.
W.l.o.g. we assume that~$ A $ does so in the next two rounds.
Hence~$ k^* $ is incremented by one and the \invariant holds before round~$ 3(i+1)+1 $.

We assume that the last path and cycle resp. the last additional component is solved optimally, i.e. algorithm~$ A $ scores four out of four edges.
In each other path and cycle and in each other additional component, algorithm~$ A $ scores three out of four edges.
Hence~$ B' $ can choose sufficiently large~$ n $ such that the approximation ratio of~$ A $ is at most~$ \frac{(n-1)\cdot3+4}{4}\leq\frac{3}{4}{+}\varepsilon$.
\end{proof}

Next, we show that two-sided algorithms can perform at most marginally better than degree sensitive algorithms, i.e.\ they cannot beat the inapproximability bound~$ \frac{\maxdeg}{2\maxdeg-2} $ considerably.

\begin{theorem}
	\label{thm:adapriodoublydegsensplus}
	Let~$ A $ be a two-sided algorithm.
	There is a bipartite input graph of degree at most~$ \maxdeg\geq3 $ for which~$ A $ computes a matching of size at most~$ \frac{\maxdeg+1}{2\maxdeg-2} $ times optimal.
\end{theorem}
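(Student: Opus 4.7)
The plan is to adapt the adversary $B$ from the proof of \cref{thm:degsens} into a new adversary $B''$ that works against two-sided algorithms. In the two-sided model, algorithm $A$ can condition its priority order on the degrees of \emph{both} endpoints, so $B$'s original strategy of freely inserting $d_u-2$ additional edges incident with $u$ after relabeling $u,v$ can fail: $A$ may simultaneously commit $v$ to a small degree, ruling out the insertions at $v$ that the original construction implicitly required. The natural compensation is to let the adversary concede exactly one $\mopt$-edge per trap to $A$, which accounts for the additive $+1$ in the numerator of $\frac{\maxdeg+1}{2\maxdeg-2}$.

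First, I would keep the global trap layout of \cref{fig:threadSmallLollipops} (the $k$ traps $T_1,\dots,T_k$ with left/right cycles, left/right paths, and $\Lambda=\maxdeg-4$ length-three connectors), together with the boundary node~$e_0$ and the extra cycle on $e_1,\dots,e_4$. I would replay the adaptive priority game and verify round by round that in all but at most one round per trap the original relabeling of the presented data item $\dataitem{u,d_u,v,d_v}$ onto a crossed edge of the current trap is consistent: the sets of candidate neighbors used by $B$ to realize $d_u$ lie in the opposite partition of $u$ and have enough spare capacity to simultaneously realize $d_v$ for $v$ without exceeding \maxdeg, since the roles of $u$ and $v$ in the original construction are on opposite sides.

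Second, I would identify the single round per trap where the two-sided priority order may block every consistent routing onto a crossed edge; in that round $B''$ concedes by presenting a data item corresponding to an $\mopt$-edge of the trap (a double edge), which $A$ then adds to its matching. Because an $\mopt$-edge of the trap is incident with degree-2 cycle nodes that admit the required $(d_u,d_v)$ values with no further insertions, this concession is always realizable. After the concession, the remainder of the trap is disconnected from the rest and is matched by $A$ exactly as in the original analysis. Consistency with prior rounds follows as in the proof of \cref{thm:degsens}: $A$ has no knowledge of the still-unmatched neighborhood, so the additional edges and the relabelings do not contradict any previously submitted order.

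Third, I would count. Each trap $T_i$ still contains $2\maxdeg-2$ edges of a maximum matching. Algorithm $A$ matches the $\maxdeg$ crossed edges plus the one conceded $\mopt$-edge, for a total of $\maxdeg+1$ edges per trap. Summing over the $k$ traps and absorbing the constant-size boundary, the approximation ratio is bounded by
\[\frac{k(\maxdeg+1)+O(1)}{k(2\maxdeg-2)+O(1)},\]
which tends to $\frac{\maxdeg+1}{2\maxdeg-2}$ as $k\to\infty$.

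The main obstacle is proving that a \emph{single} concession per trap always suffices: I must rule out two-sided priority orders that would force $B''$ to yield two or more $\mopt$-edges inside the same trap. This reduces to a case analysis of which $(d_u,d_v)\in[2,\maxdeg]^2$ pairs are realizable in the reduced graph at the start of each round of $T_i$, together with the observation that once the concession round has been executed the reduced subgraph of $T_i$ is combinatorially isomorphic to the reduced trap of the one-sided construction, so from that point on the argument of \cref{thm:degsens} carries over verbatim and no further concession is needed.
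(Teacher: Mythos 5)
Your high-level intuition---that the two-sided model breaks the one-sided adversary because it must now also realize a requested degree $d_v$ at $v$, and that the numerator $\maxdeg+1$ reflects one extra edge won by the algorithm---is reasonable, but the proposal has a genuine gap at its core step. You never establish that a \emph{single} concession per trap suffices, and the rules of the adaptive priority game make this much harder than you suggest. The adversary is not free to ``decide'' which data item to present: it must present the highest-priority item in the submitted order that is consistent with the final graph, so every pair $(d_u,d_v)$ it wishes to skip in a given round must be absent from the \emph{entire} remaining graph, not merely unrealizable at the intended crossed edge. In the trap construction the $y$-nodes (the second endpoints of the connector edges) have no designated high-capacity receiver set in their opposite partition, and nodes such as $c_1^i$ already have degree $\maxdeg-1$ before any insertions, so many pairs $(d_u,d_v)$ are realizable only at other edges of the trap or of later traps; a two-sided algorithm can exploit this round after round, and nothing in your argument bounds the number of concessions by one per trap. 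Your closing remark that after the concession ``the argument of \cref{thm:degsens} carries over verbatim'' cannot be right, since the one-sided argument already fails against a two-sided priority order in every subsequent round for exactly the same reason. Finally, even granting the claim, your ratio $\frac{k(\maxdeg+1)+O(1)}{k(2\maxdeg-2)+O(1)}$ only approaches the bound as $k\to\infty$, whereas the theorem asserts it exactly and without an $\varepsilon$.

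The paper avoids all of this with a single constant-size gadget $\g'$ (\cref{fig:2pDdgraph}): a game of exactly $\maxdeg+1$ rounds on a graph with a perfect matching of size $2\maxdeg-2$, in which each of the $\delta=\maxdeg-3$ designated edges $\edge{u_i,v_i}$ comes equipped with \emph{two} dedicated receiver sets---the $a$-nodes on the $u$-side and the $b$-nodes on the $v$-side---each with enough spare capacity to absorb $d_u-2$ resp.\ $d_v-2$ additional edges for \emph{every} pair $(d_u,d_v)\in\{2,\dots,\maxdeg\}^2$. Because every allowed pair is realizable at the designated edge, the adversary never needs to skip a data item and never concedes; the algorithm scores $\delta+4=\maxdeg+1$ edges, giving the exact ratio. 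To salvage your approach you would have to redesign the traps so that both endpoints of every intended edge have such receiver sets and so that no skipped degree pair occurs anywhere else in the graph---at which point you have essentially rebuilt the paper's gadget.
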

\begin{proof}
	The adaptive priority game between~$ A $ and an adversary~$ B $ lasts for~$ \maxdeg+1 $ rounds.
	Let~$ \delta=\maxdeg-3 $.
	The final construction~\g contains the graph~$ \g' $ depicted in \cref{fig:2pDdgraph} as a subgraph.
	In particular, graph~\g contains additional edges which are not depicted in~$ \g' $, but~\g does not have any additional nodes.
	
	\begin{figure}[htbp!]
		\centering
		\scalebox{.8}{
		\begin{tikzpicture}
		\input{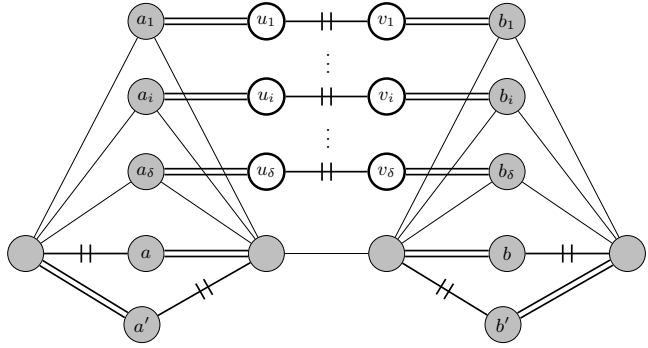}
		\node[fill=lightgray] (c1) {};
		\node[fill=lightgray,right of=c1,right of=c1] (a) {$ a $};
		\node[fill=lightgray,below right=.75 and 1.5 of c1] (a') {$ a' $};
		\node[fill=lightgray,right of=c1,right of=c1,right of=c1,right of=c1] (c2) {};
		\node[fill=lightgray,right of=c2,right of=c2] (c3) {};
		\node[fill=lightgray,right of=c3,right of=c3] (b) {$ b $};
		\node[fill=lightgray,below right=.75 and 1.5 of c3] (b') {$ b' $};
		\node[fill=lightgray,right of=c3,right of=c3,right of=c3,right of=c3] (c4) {};
		
		\node[fill=lightgray,above=3.25 of a] (a1) {$ a_1 $};
		\node[right of=a1,right of=a1,very thick] (u1) {$ u_1 $};
		\node[right of=u1,right of=u1,very thick] (v1) {$ v_1 $};
		\node[fill=lightgray,right of=v1,right of=v1] (b1) {$ b_1 $};
		
		\node[fill=lightgray,above=2 of a] (ai) {$ a_i $};
		\node[right of=a,right of=ai,very thick] (ui) {$ u_i $};
		\node[right of=ui,right of=ui,very thick] (vi) {$ v_i $};
		\node[fill=lightgray,right of=vi,right of=vi] (bi) {$ b_i $};
		
		\node[fill=lightgray,above=.75 of a] (ad) {$ a_{\delta} $};
		\node[right of=ad,right of=ad,very thick] (ud) {$ u_{\delta} $};
		\node[right of=ud,right of=ud,very thick] (vd) {$ v_{\delta} $};
		\node[fill=lightgray,right of=vd,right of=vd] (bd) {$ b_{\delta} $};
		
		\node[mylabel,below right=.1 and .75 of u1] {$ \vdots $};
		\node[mylabel,below right=.1 and .75 of ui] {$ \vdots $};
		
		\draw[newopt]
		(a1) -- (u1)
		(v1) -- (b1)
		(ai) -- (ui)
		(vi) -- (bi)
		(ad) -- (ud)
		(vd) -- (bd)
		(a) -- (c2)
		(c3) -- (b)
		(c1) -- (a')
		(b') -- (c4)
		;
		
		\draw[alg] (u1) -- (v1);
		\draw[alg] (ui) -- (vi);
		\draw[alg] (ud) -- (vd);
		\draw[alg] (c1) -- (a);
		\draw[alg] (b) -- (c4);
		\draw[alg] (a') -- (c2);
		\draw[alg] (c3) -- (b');
		
		\draw
		(a1) -- (c1)
		(a1) -- (c2)
		(ai) -- (c1)
		(ai) -- (c2)
		(ad) -- (c1)
		(ad) -- (c2)
		(b1) -- (c3)
		(b1) -- (c4)
		(bi) -- (c3)
		(bi) -- (c4)
		(bd) -- (c3)
		(bd) -- (c4)
		(c2) -- (c3)
		;
		\end{tikzpicture}
		}
		\caption[Two-sided algorithms: A Hard Instance for $ \maxdeg\geq4 $]{The subgraph~$ \g' $ of the final construction
			(the algorithm receives data items for bold nodes)}
		\label{fig:2pDdgraph}
	\end{figure}
	
	The construction of~$ B $ proceeds such that in rounds~$ 1,\dots,\delta $ algorithm~$ A $ picks edges~$ \edge{u_1,v_1},\dots,\edge{u_{\delta},v_{\delta}} $ and after round~$ \delta $ the reduced graph consists only of gray nodes and of edges connecting gray nodes.
	Observe that all remaining edges touch exactly four gray nodes, namely the unlabeled ones in the figure.
	We assume that in this reduced graph algorithm~$ A $ scores four edges in four rounds, which is optimal.
	Since~$ \g' $ contains a perfect matching of size~$ 2\maxdeg-2 $ and~$ A $ scores one edge in each of~$ \delta+4=\maxdeg+1 $ rounds, the approximation ratio is~$ \frac{\maxdeg+1}{2\maxdeg-2} $, as claimed.
	In the rest of the proof it remains to discuss the first~$ \delta $ rounds.

	\medskip
	
	Recall that~$ A $ does not receive identifiers of neighbors of the nodes in a data item.
	As a consequence, in each round adversary~$ B $ is free---without being inconsistent---to relabel nodes in~$ \g' $ according to the data item presented to~$ A $.
	%
	
	We proceed inductively.
	Assume that at the beginning of round~$ i $ algorithm~$ A $ has picked edges~$ \edge{u_1,v_1},\dots,\edge{u_{i-1},v_{{i-1}}} $.
	The minimum degree in the reduced graph~$ \g_{i} $ is~$ 2 $.
	Adversary~$ B $ uses the set~$ D=\{2,\dots,\maxdeg\} $ of \emph{allowed} degrees:
	From the order submitted by~$ A $ in round~$ i $ adversary~$ B $ presents the highest priority data item~\dataitem{u,d_u,v,d_v} with~$ d_u\in D $ and~$ d_v\in D $.
	Adversary~$ B $ relabels nodes such that~$ u=u_i $ and~$ v=v_i $ hold:
	algorithm~$ A $ picks edge~\edge{u_i,v_i}, as desired.
	
	Now~$ B $ delivers on its promise that both nodes have degree~$ d_u $ resp.\ $ d_v $.
	Therefore~$ B $ inserts additional edges into the graph.
	In particular, since~$ u_i $ already has two incident edges in~$ \g' $, adversary~$ B $ adds~$ d_u-2 $ edges, each incident with~$ u_i $ and one of nodes~$ a,a',a_1,\dots,a_\delta $.
	Analogously, adversary~$ B $ adds~$ d_v-2 $ edges, each incident with~$ v_i $ and one of nodes~$ b,b',b_1,\dots,b_\delta $.
	
	\medskip
	
	It remains to show that~$ B $ does not violate degree constraints when inserting new edges.
	Since we have~$ d_u\leq\maxdeg $ and thus~$ d_u-2\leq\maxdeg-2 $, for all~$ u $-nodes at most~$ \delta(\maxdeg{-}2)=(\maxdeg-3)(\maxdeg{-}2)=(\maxdeg{-}3)^2+(\maxdeg{-}3) $ edges are inserted.
	Since all~$ a $-nodes can receive up to~$ \delta(\maxdeg{-}3)+2(\maxdeg{-}2)=(\maxdeg{-}3)^2+2(\maxdeg{-}2) $ edges, their degrees are increased to at most~\maxdeg if new edges are distributed evenly.
	Analogously, degrees of~$ b $-nodes are at most~\maxdeg.
\end{proof}

Finally, we show that the two-sided \mds algorithm does not achieve better approximation ratio than any degree sensitive algorithm, for all~$ \maxdeg $.

\begin{theorem}
	\label{thm:adapriomds}
	For each~$ \maxdeg\geq3 $ there is a bipartite graph of degree at most~$ \maxdeg $ for which \mds computes a matching of size at most~$ \frac{\maxdeg}{2\maxdeg-2} $ times optimal.
\end{theorem}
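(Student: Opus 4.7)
\medskip
\noindent\textbf{Proof plan.}
The plan is to exhibit, for each~$\maxdeg\ge3$, a fixed bipartite input graph on which \mds achieves approximation ratio at most~$\frac{\maxdeg}{2\maxdeg-2}+\varepsilon$ for arbitrary~$\varepsilon>0$. The case~$\maxdeg=3$ already follows from \cref{thm:doublydegsens}, since \mds is two-sided, so I focus on~$\maxdeg\ge4$. The graph chains~$k$ copies of a ``trap'' modelled on the one in the proof of \cref{thm:degsens}, with small bipartite gadgets capping the two ends so that a near-perfect matching exists. In contrast to the proof of \cref{thm:degsens}, no edges are inserted adaptively: \mds is a specific algorithm, and the only freedom left to the adversary is the choice of the input graph and the tie-breaking rule used when several edges share the minimum degree sum.

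Each trap~$T_i$ contains two four-cycles, two short connector paths, and~$\maxdeg-4$ length-three arms~$w_j,x_j,y_j,z_j$ as in \cref{fig:threadSmallLollipops}. With all edges present from the outset, the degrees are fixed: $c_1,d_1$ have degree~$\maxdeg$; $c_3,d_3$ have degree~$\maxdeg-1$; the arm-interior nodes~$x_j,y_j$ have degree~$2$. Hence every arm edge has degree sum~$4$, every non-arm edge inside~$T_i$ has sum at least~$\maxdeg+1$, and every cross-trap edge has still larger sum. The argument proceeds by induction on the traps. In the inductive step, \mds first picks all~$\maxdeg-4$ arm edges of~$T_i$ because they have strictly smallest sum. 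Thereafter a four-round case analysis of the residual cycle-and-path subgraph of~$T_i$ identifies, under adversarial tie breaking, four further minimum-sum picks---a path-to-cycle edge~$\{p_1,c_1\}$, a chord-like cycle edge, the analogous right-side edge~$\{q_1,d_1\}$, and one more chord-like right-cycle edge---yielding exactly~$\maxdeg$ \mds-edges per trap against a local optimum of~$2\maxdeg-2$. The remaining graph decouples into~$T_{i+1},\ldots,T_k$ plus an irrelevant isolated piece, so the induction closes and the overall ratio~$\frac{\maxdeg k}{(2\maxdeg-2)k+O(1)}$ tends to~$\frac{\maxdeg}{2\maxdeg-2}$ as~$k\to\infty$.

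The main obstacle is tracking degree sums through the four post-arm rounds of each trap. Removing the arm edges drops each~$w_j,z_j$ from degree~$3$ to degree~$2$, creating new edges~$\{c_3,w_j\}$ and~$\{d_3,z_j\}$ of sum~$\maxdeg+1$ that tie with the intended cycle and path picks. Checking round by round that the intended pick remains among the minimum-sum edges, so that adversarial tie breaking suffices to force it, is a careful but routine bookkeeping that dominates the formal proof. The boundary gadgets at the two ends of the chain contribute only an~$O(1)$ error, which is absorbed by taking~$k$ large enough.
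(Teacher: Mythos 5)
There is a genuine gap, and it sits exactly in the step you defer as ``careful but routine bookkeeping.'' With a static edge set, the degree-sum ordering inside a trap of \cref{fig:threadSmallLollipops} does not force \mds onto the intended edges. Your claim that ``every non-arm edge inside~$T_i$ has sum at least~$\maxdeg+1$'' is false: the connector-path edges~$\edge{p_1^i,p_2^i}$ and~$\edge{q_1^i,q_2^i}$ join a degree-2 node to a degree-3 node and therefore have degree sum~$5$, which is strictly below~$\maxdeg+1$ for~$\maxdeg\geq5$ (and ties with it for~$\maxdeg=4$, where there are no arms at all). The intended pick~$\edge{p_1^i,c_1^i}$ has sum~$\maxdeg+2$. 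So after the arm edges are gone, \mds is \emph{forced} to pick~$\edge{p_1^i,p_2^i}$ --- a maximum-matching edge --- rather than merely being allowed to avoid it under adversarial tie breaking, and the per-trap count of~$\maxdeg$ algorithm edges against~$2\maxdeg-2$ optimum edges collapses. This cannot be repaired by bookkeeping alone: the trap was designed for an adaptive adversary that pads degrees on the fly and exploits the algorithm's ignorance of node identities, and once the graph is fixed in advance the degree-sum criterion of \mds sees the low-degree path interiors and steers toward them. A separate, smaller issue is that your chained construction would in any case only yield ratio~$\frac{\maxdeg}{2\maxdeg-2}+\varepsilon$, whereas the theorem asserts the exact factor~$\frac{\maxdeg}{2\maxdeg-2}$.

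For comparison, the paper's proof uses a completely different and much simpler one-shot gadget (\cref{fig:mds}): two centers~$c_L,c_R$ of degree~$\maxdeg$, each with a pendant and with~$k=\maxdeg-2$ length-two ``spokes'' whose outer endpoints are joined pairwise by~$k$ middle edges of degree sum~$4$. Since every other edge has degree sum at least~$\maxdeg+1\geq4$, worst-case tie breaking lets \mds take all~$k$ middle edges, after which only two more edges (one per center) can be added, giving~$\maxdeg$ matched edges against a maximum matching of size~$2\maxdeg-2$ --- the exact bound, with no asymptotics, no chaining, and no induction.
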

\begin{proof}
	\begin{figure}[htbp!]
		\centering
		\scalebox{.75}{
			\begin{tikzpicture}
			\input{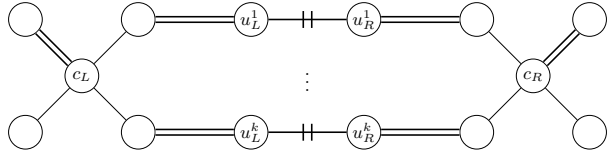}
			
			\node (cl) {$ c_L $};
			\node[left of=cl,above of=cl] (cl1) {};
			\node[left of=cl,below of=cl] (cl2) {};
			
			\node[right of=cl,above of=cl] (w1l) {};
			\node[right of=w1l,right of=w1l] (u1l) {$ u_L^1 $};
			\node[right of=u1l,right of=u1l] (u1r) {$ u_R^1 $};
			\node[right of=u1r,right of=u1r] (w1r) {};
			
			\node[right of=cl,below of=cl] (wnl) {};
			\node[right of=wnl,right of=wnl] (unl) {$ u_L^k $};
			\node[right of=unl,right of=unl] (unr) {$ u_R^k $};
			\node[right of=unr,right of=unr] (wnr) {};
			
			\node[right of=w1r,below of=w1r] (cr) {$ c_R $};
			\node[right of=cr,above of=cr] (cr1) {};
			\node[right of=cr,below of=cr] (cr2) {};
			
			\draw[newopt] (cl1) -- (cl);
			\draw[newopt] (cr1) -- (cr);
			\draw[newopt] (w1l) -- (u1l);
			\draw[newopt] (w1r) -- (u1r);
			\draw[newopt] (wnl) -- (unl);
			\draw[newopt] (wnr) -- (unr);
			
			\draw (cl2) -- (cl);
			\draw (cr2) -- (cr);
			\draw (cl) -- (w1l);
			\draw (cl) -- (wnl);
			\draw (cr) -- (w1r);
			\draw (cr) -- (wnr);
			
			\draw[alg] (u1l) -- (u1r);
			\draw[alg] (unl) -- (unr);
			
			\node[draw=none] at ($ (u1l)+(1,-1) $) {$ \vdots $};
			\end{tikzpicture}
		}
		\caption[\mds: A Hard Instance]{A hard instance for \mds}
		\label{fig:mds}
	\end{figure}
	Choose~$ k=\maxdeg-2 $.
	The hard instance is depicted in \cref{fig:mds}.
	Observe that the degree sum of any edge is at least~4, since nodes~$ c_L $ and~$ c_R $ have degree at least~3.
	Hence we may assume that in the first step \mds picks edge~\edge{u_L^1,u_R^1} (the top crossed edge).
	Assume that edges~$ \edge{u_L^1,u_R^1},\dots,\edge{u_L^i,u_R^i}  $ with~$ i<k $ have already been picked.
	The minimum degree sum is still four, and edge~\edge{u_L^{i+1},u_R^{i+1}} is picked next.
	In the end, for each of nodes~$ c_L $ and~$ c_R $ an incident edge is picked.
	
	Hence the computed matching has~$ k+2=\maxdeg $ edges, whereas a maximum matching consists of the~$ 2k+2=\maxdeg-2 $ double drawn edges.
\end{proof}


\section{Bounded Average Degree}
\label{app:avgdeg}

In \cref{sect:adaprio_upper_bound} we have shown that \mingreedy and \karpsipser achieve the optimal approximation guarantee~\theratio on bipartite graphs with degrees bounded by~\maxdeg.
Our inapproximability results carry over to graphs of bounded average degree.
However, both \mingreedy and the \karpsipser algorithm achieve approximation guarantee only~$ \oneOverTwo+\varepsilon $ even if the average degree is constant.
\begin{theorem}
\label{thm:avgdeg}
The approximation guarantee of \mingreedy and the \karpsipser algorithm is bounded by at most~$ \oneOverTwo+\varepsilon $ for bipartite graphs with average degree at most~$ \frac{7}{2} $, for any~$ \varepsilon>0 $.
\end{theorem}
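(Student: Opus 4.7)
\emph{Proof proposal.}
The plan is to reuse the construction from the proof of \cref{thm:degsens}, \emph{except} that the adversary inserts none of the additional edges that it would use against a general degree sensitive algorithm. \karpsipser and \mingreedy can both be driven through the same schedule by tie breaking alone: \karpsipser only distinguishes whether a degree-1 node exists, and \mingreedy always demands the current minimum degree, and in every round the base graph already supplies an edge matching the request.

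First I verify the simulation. While the reduced graph has minimum degree~2 the adversary relabels so that the algorithm picks a middle edge~\edge{x_j^1,y_j^1} of some length-three path; this removes two degree-2 nodes and leaves~$ w_j^1,z_j^1 $ at degree~2, so no degree-1 node is produced during the first~$ \Lambda=\maxdeg-4 $ rounds. Next the adversary forces~\edge{c_1^1,p_1^1} (both endpoints still have degree~$ \geq 2 $), whereupon the degree-1 nodes of the reduced graph are precisely~$ c_2^1,c_4^1,w_1^1,\ldots,w_\Lambda^1 $ together with the external neighbor of~$ c_3^1 $ in the plain variant. All of these are adjacent only to~$ c_3^1 $, so any tie breaking picks a degree-1 edge incident with~$ c_3^1 $; matching~$ c_3^1 $ isolates the rest of the star, exactly as in the schedule of \cref{thm:degsens}. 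A symmetric argument handles~\edge{q_1^1,d_1^1} and the star around~$ d_3^1 $. Hence trap~$ T_1 $ is exhausted after~\maxdeg matched edges, and the iteration over~$ T_2,\ldots,T_k $ is verbatim.

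Second, a handshake-lemma calculation bounds the average degree. Each trap contributes~$ 16+7\Lambda $ edges on~$ 12+4\Lambda $ nodes (eight cycle edges, two path edges, two cycle-to-path edges, two cross connections, two edges to the following trap, and seven per length-three path), and the end components (including the length-four cycle replacing~$ e_0 $ in the perfect matching variant as well as the modification for~$ \maxdeg=3 $) add only~$ O(1) $ extra nodes and edges. The average degree therefore equals
\[ \frac{2\bigl(k(16+7\Lambda)+O(1)\bigr)}{k(12+4\Lambda)+O(1)}\;\xrightarrow[k\to\infty]{}\;\frac{2(16+7\Lambda)}{12+4\Lambda}\;<\;\frac{7}{2}, \]
since the limit is monotone increasing in~$ \Lambda $ with supremum~$ \frac{7}{2} $. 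Meanwhile the algorithms match~$ k\maxdeg+O(1) $ edges whereas a maximum matching has size~$ k(2\maxdeg-2)+O(1) $, so the approximation ratio is at most~$ \theratio+o(1) $; choosing first~$ \maxdeg $ large enough and then~$ k $ large enough pushes this below~$ \oneOverTwo+\varepsilon $.

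The main obstacle, as I see it, is the local check that after each ``bridge'' pick~\edge{c_1^1,p_1^1} (and symmetrically~\edge{q_1^1,d_1^1}) the only degree-1 nodes of the reduced graph lie in the star around~$ c_3^1 $ (resp.~$ d_3^1 $), so that the algorithm's forced choice among degree-1 edges still completes the intended schedule regardless of how ties are broken. This amounts to a finite case analysis on the degrees of the remaining nodes after each round and must be redone for the~$ \maxdeg=3 $ variant where the length-three paths are absent.
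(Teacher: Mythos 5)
Your proposal is correct, but it takes a genuinely different route from the paper. The paper does not touch the trap construction at all: it builds one explicit family of graphs on node sets $L,U,V,W,X,R$ with $|L|=|R|=2$ and $|U|=|V|=|W|=|X|=n$, edge set $(L\times U)\cup(U*V)\cup(V*W)\cup(W*X)\cup(X\times R)$ (where $A*B$ pairs up same-index nodes), and observes that under worst-case tie breaking the algorithm spends its first $n$ rounds on the $n$ edges of $V*W$, whose endpoints all have the minimum degree two; afterwards only edges at the four $L$/$R$ nodes survive, giving ratio $(n+4)/(2n)\to\oneOverTwo$ with average degree $14n/(4n+4)\to\frac{7}{2}$. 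You instead recycle the trap gadget of \cref{thm:degsens} with the adversary's additional edges stripped out and let $\maxdeg\to\infty$ so that $\theratio\to\oneOverTwo$. Your simulation is sound: in the degree-1-free rounds \karpsipser may pick any edge and \mingreedy any minimum-degree (degree-2) node, so worst-case tie breaking realizes the intended picks, and after each bridge pick the only degree-1 nodes in the whole graph are the leaves of the star around $c_3^i$ (resp.\ $d_3^i$), so the forced rounds complete the schedule; your count of $16+7\Lambda$ edges on $12+4\Lambda$ nodes per trap is right and keeps the average degree strictly below $\frac{7}{2}$. (That both limits equal $\frac{7}{2}$ is no accident: the dominant gadget is the same in both constructions, a three-edge path on four nodes carrying four attachment edges.) What the paper's construction buys is simplicity and generality -- it needs no re-verification of a multi-round schedule, the ratio is exactly $\oneOverTwo$ in the limit rather than $\theratio$, and, as the paper notes, the same graph defeats \greedy, \mrg, \shuffle, the minimum-degree-sum algorithm and all query-commit algorithms, whereas your star-forcing argument is tailored to \karpsipser and \mingreedy. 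Your closing worry about redoing the analysis for $\maxdeg=3$ is moot here, since that variant only yields ratio $\threeOverFour$ and cannot approach $\oneOverTwo$.
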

\noindent
We note that our construction also applies to \greedy, \mrg, \shuffle, the \emph{minimum degree sum} algorithm, the algorithm which first selects a minimum degree node and then a minimum degree neighbor, and to all algorithms for the query commit problem.
\begin{proof}
Nodes of the graph are partitioned into sets~$ L,U,V,W,X,R $, where we have~$ |L|{=}|R|{=}2 $ and~$ Y{=}\{v_Y^1,\dots,v_Y^n\} $ for~$ Y{\in}\{U,V,W,X\} $ and $ n {\in }\mathbb{N} $.
For~$ A,B\in\{U,V,W,X\} $ we denote the set~$ \{\edge{v_A^i,v_B^i}:1\leq i\leq n\} $ as~$ A*B $.
The edge set is
\begin{align*}
(L\times U)~\cup~ (U*V)~\cup~ (V*W)~\cup~ (W*X)~\cup~(X\times R)\,.
\end{align*}
Nodes in~$ L,R $ have degree~$ n $, nodes in~$ U,X $ have degree~$ 3 $, and nodes in~$ V,W $ have degree two.
We argue that any of the given algorithms proceeds as follows, considering worst case tie breaking:
in each of the first~$n$ rounds an edge in~$V*W$ is picked.
Why?
Assuming that only edges in~$ V*W $ have already been picked, the minimum degree over all non-isolated nodes is two;
furthermore, both nodes of each remaining edge in~$ V*W $ have minimum degree degree two.

After round~$ n $ all remaining edges are incident with nodes in~$ L,R $ and the algorithm scores at most four more edges, i.e. a matching of size~$ n+4 $ is computed.
However, observe that $ (U*V) \cup (W*X) $ is a matching of size~$ 2n $.
Therefore an algorithm computes a matching of size at most~$ \frac{n+4}{2n} $ times optimal, which converges to~\oneOverTwo as~$ n\to\infty $.
The average degree in the graph is~$ \frac{2\cdot2\cdot n+2\cdot n\cdot3+2\cdot n\cdot2}{4n+4}=\frac{14n}{4n+4} $, which converges to~$ \frac{7}{2} $ from below.
\end{proof}

\section{Conclusion and Open Problems}
\label{sect:conclusion}
\mingreedy and \karpsipser achieve optimal approximation guarantee~\theratio among degree sensitive algorithms, on bipartite graphs with degrees at most~\maxdeg.

If degree sensitive algorithms are allowed to use data items with degrees of \emph{both} neighbors (`two-sided' algorithms), then we conjecture that the same inapproximability factor~\theratio applies.
However, we can only provide partial proofs, namely for~$ \maxdeg{=}3 $ and for the \mds algorithm.

The \karpsipser algorithm is a refinement of \greedy, since it picks a random edge unless there is a degree-1 node.
What is the expected approximation ratio of the \karpsipser algorithm and the analogous refinement of \mrg?

\bibliography{bib}{}

\end{document}